\documentclass[11pt,a4paper]{article}
\usepackage[a4paper]{geometry}
\geometry{top=1.0in, bottom=1.2in, left=1.0in, right=1.0in}
\usepackage[T1]{fontenc}
\usepackage[utf8]{inputenc}
\RequirePackage[numbers]{natbib}
\usepackage{amsmath,amsfonts,amssymb,amsthm}
\usepackage{natbib}
\usepackage[usenames,dvipsnames,svgnames,table]{xcolor}
\usepackage{mathtools}
\usepackage{mathrsfs}	
\usepackage{booktabs}
\usepackage{graphicx}
\usepackage{subfigure}
\usepackage{caption}
\usepackage{enumerate}
\usepackage[normalem]{ulem}
\usepackage{float}

\mathtoolsset{showonlyrefs=true}
\usepackage{titlesec}
\setcounter{secnumdepth}{4}

\numberwithin{table}{section}
\usepackage[hyperindex,breaklinks]{hyperref}
\hypersetup{frenchlinks=true}

\makeatletter
\def\timenow{\@tempcnta\time
\@tempcntb\@tempcnta
\divide\@tempcntb60
\ifnum10>\@tempcntb0\fi\number\@tempcntb
:\multiply\@tempcntb60
\advance\@tempcnta-\@tempcntb
\ifnum10>\@tempcnta0\fi\number\@tempcnta}
\makeatother

\newtheorem{lemma}{Lemma}
\newtheorem{proposition}{Proposition}

\newtheorem{definition}{Definition}
\theoremstyle{definition}
\newtheorem{assumption}{Assumption}

\newcommand{\cA}{\mathcal{A}}
\newcommand{\cB}{\mathcal{B}}
\newcommand{\cC}{\mathcal{C}}
\newcommand{\cD}{\mathcal{D}}

\newcommand{\cF}{\mathcal{F}}
\newcommand{\cH}{\mathcal{H}}
\newcommand{\cL}{\mathcal{L}}
\newcommand{\cM}{\mathcal{M}}
\newcommand{\cN}{\mathcal{N}}

\newcommand{\cR}{\mathcal{R}}
\newcommand{\cS}{\mathcal{S}}

\newcommand{\cV}{\mathcal{V}}

\newcommand{\crO}{\mathscr{O}}

\newcommand{\E}{\mathbb{E}}
\newcommand{\F}{\mathbb{F}}

\renewcommand{\P}{\mathbb{P}}

\newcommand{\R}{\mathbb{R}}

\newcommand{\dd}{\mathrm{d}}
\newcommand{\ee}{\mathrm{e}}

\newcommand{\del}{\partial}

\renewcommand{\epsilon}{\varepsilon}

\newcommand{\tilO}{\tilde{\mathcal{O}}}

\newcommand{\eqlnostar}[2]{\begin{align}\label{#1}#2\end{align}}
\newcommand{\eqstar}[1]{\begin{align*}#1\end{align*}}

\title{Portfolio Benchmarking under Drawdown Constraint and Stochastic Sharpe Ratio}

\author{Ankush Agarwal\thanks{Centre de Math\'ematiques Appliqu\'ees, \'Ecole Polytechnique and CNRS, Route de Saclay, 91128 Palaiseau Cedex, France; Email: {\tt ankush.agarwal@polytechnique.edu}. The author research is  part of the Chair {\it Financial Risks} of the {\it Risk Foundation}.} \and Ronnie Sircar\thanks{ORFE Department, Sherrerd Hall, Princeton University,  Princeton NJ 08544; Email: {\tt sircar@princeton.edu}. The author research is partially supported by NSF grant DMS-1211906.}}

\date{This version: \today} 

\begin{document}

\maketitle

\begin{abstract}
We consider an investor who seeks to maximize her expected utility derived from her terminal wealth relative to the maximum performance achieved over a fixed time horizon, and under a portfolio \emph{drawdown} constraint, in a market with local stochastic volatility (LSV). In the absence of closed-form formulas for the value function and optimal portfolio strategy, we obtain approximations for these quantities through the use of a coefficient expansion technique and nonlinear transformations. We utilize regularity properties of the \emph{risk tolerance} function to numerically compute the estimates for our approximations. In order to achieve similar value functions, we illustrate that, compared to a constant volatility model, the investor must deploy a quite different portfolio strategy which depends on the current level of volatility in the stochastic volatility model.
 
\vspace{0.5pc}
\noindent {\bf Keywords and phrases.} portfolio optimization, drawdown, stochastic volatility, local volatility

\vspace{0.5pc}

\noindent {\bf AMS (2010) classification.} 91G10, 91G80

\vspace{0.5pc}

\noindent {\bf JEL classification.} G11

\end{abstract}

\section{Introduction}
\label{sec:intro}
\subsection{Background and motivation}
In the vast and long-dated literature on dynamic portfolio optimization, different types of terminal utility paradigms under various portfolio constraints have been considered to understand investor behaviour (see, for instance, \citet{rogers2013optimal} for a detailed exposition). The solutions to these problems provide optimal investment strategies which aid institutional investors, and at times help to reveal deep insights about market observed phenomenons. The classical problem of continuous-time portfolio optimization dates back to \citet{samuelson1969lifetime} and \citet{merton1971optimum,merton1969lifetime}. In his seminal paper, \citet{merton1971optimum} considered a market where the prices of risky assets are given by geometric Brownian motions (with \emph{constant} volatilities), and the objective is to maximize the expected utility of terminal wealth by investing capital between the risky assets and a risk-free bank account. For constant relative risk aversion utility (CRRA) functions, the author showed that the optimal strategy is a ``fixed mix'' investment in the risky assets and the bank account. 

Merton's landmark result provided structural market insight but the restrictive problem setting -- investor objective and market dynamics -- prevented application of the results to practical situations. As a result, subsequent research has focused upon relaxing the assumptions made in \cite{merton1971optimum}, incorporating various market constraints and considering more realistic model settings. 

Portfolio managers typically use a \emph{stop-loss} level on the portfolio value to prevent a complete wipe-out of  wealth in the face of falling prices. This is also known more commonly as the \emph{drawdown} constraint. Under this constraint, the wealth in the portfolio must always remain above a certain fraction of the current maximum wealth value achieved. Furthermore, in several instances, portfolio managers commit a certain percentage of the starting wealth to the pooling investors. This situation is also covered by imposing a drawdown constraint on the portfolio wealth.

In this article, we propose a new framework to study the dynamic portfolio optimization under a drawdown portfolio constraint in a stochastic volatility market model. In many empirical studies it has been well established that stochastic volatility is a reasonable asset price modelling tool to capture the market observed volatility smiles and volatility clustering. Our principal innovation is to introduce a new terminal investor objective paradigm which allows for a reduction in the dimensionality of the problem. As our central objective in this work is to numerically study the impact of stochastic volatility on the value function and optimal portfolio strategy, the dimensionality reduction serves as a crucial feature to allow for an efficient implementation of the numerical procedures used to solve the problem and study the effects of stochastic volatility. 
\subsection{Literature review}
Several authors have considered the optimal portfolio problems under drawdown constraint. \citet{grossman1993optimal} were the first to comprehensively study this problem over infinite time horizon in a lognormal market model. They investigated to maximize the long term growth rate of the expected utility of the wealth and used dynamic programming principle to solve the problem. \citet{cvitanic1995portfolio} streamlined the analysis of \citet{grossman1993optimal} and extended the results to the case when there are multiple risky assets whose dynamics are governed by a lognormal model with deterministic coefficients. By defining an auxiliary process, they were able to show that the solution of optimization problem with drawdown constraint can be linked to an unconstrained optimization problem whose solution follows from the work of \citet{karatzas1987optimal}. They further showed that in the case of logarithmic utility function, the results hold even if the coefficients in the lognormal model are random and satisfy some ergodicity condition. In \cite{sekine2013long}, Sekine carried forward the arguments and results of \citet{cvitanic1995portfolio} to a multi-asset market model with single stochastic volatility factor. More recently, Cherny and Ob{\l}{\'o}j \cite{cherny2013portfolio} have studied the optimal portfolio problem in an abstract semimartingale model with a generalized drawdown constraint. They utilized the properties of Az\'{e}ma-Yor processes to show that the value function of the constrained problem, where the investor objective is to maximize the long term growth rate of the expected utility, has the same value function as an unconstrained problem with a suitably modified utility function. Moreover, they showed that the optimal wealth process can also be obtained as an explicit pathwise transformation of the optimal wealth process in the unconstrained problem. 

The portfolio optimization problem with drawdown constraint has also been studied in a continuous-time framework with consumption. \citet{roche2006optimal} studied the problem of maximizing the expected utility of consumption over an infinite time horizon for a power utility function under a linear drawdown constraint. This analysis was performed in the setting of a lognormal model with single asset. \citet{elie2008optimal} subsequently generalized the result to a general class of utility functions in the setting of zero interest rates and obtained an explicit representation of the solution. \citet{elie2008finite} also studied a finite time version of the same problem and in the absence of analytical representation, he provided a numerical solution to the problem.

In the financial literature, different problem settings with a drawdown constraint have received considerable attention due to their significance. \citet{magdon2004maximum} considered the problem of optimal portfolio choice when the drawdown is minimized in the single asset lognormal market model. \citet{chekhlov2005drawdown} analyzed the portfolio optimization problem in discrete time where the investor objective is to maximize the expected return from the portfolio subject to risk constraints given in terms of drawdowns. They considered a multi-asset market model and reduced the problem to a linear programming problem which can be solved numerically. In the insurance literature, drawdown constraint has been incorporated to study problems of lifetime investments. In \cite{chen2015minimizing}, Chen et al.\ considered the optimization problem of minimizing the probability of a significant drawdown occurring over a lifetime investment, i.e.\ the probability that portfolio wealth hits the drawdown barrier before a random time which represents the death time of a client. 

\subsection{Our contributions}
In this article, we consider an investor who at any time is worried about her wealth falling below a fixed fraction of the running maximum wealth and, thus, is only interested to maximize the ratio of these two quantities at the end of a fixed investment horizon. As the investor is cautious about the drawdown, consequently it is not possible to achieve an unreasonable amount of wealth by looking at an unbounded terminal utility. Therefore, it is sensible to consider a bounded terminal utility. The proposed investor objective paradigm is also motivated from the perspective of portfolio benchmarking and fixed target problems. In our setting, we start from an initial value of the maximum wealth which satisfies the drawdown constraint. The portfolio strategy in our problem allows the portfolio wealth to hit the level of initial maximum wealth by investing in the risky asset thus hitting the target or benchmark. Heuristically, it can also be deduced that the optimal portfolio strategy will liquidate the position in the risky asset once the maximum wealth is reached. This mimics the logic of classical Merton strategy which suggests to sell the risky asset close to the highest value of the portfolio.

We consider the basic setting of a frictionless financial market with a single underlying asset and a risk-free money market account. We study this problem in a stochastic volatility environment to demonstrate how uncertainty in the volatility impacts the optimal portfolio strategy. This problem has no explicit solution and thus, we look for accurate approximations to the value function and optimal strategy. We use the technique of coefficient expansion to formulate separate problems for different terms in the expansion of value function. The solutions to these problems allow us to derive an expansion for the optimal portfolio strategy. Due to the presence of portfolio constraints, the expansion terms in the value function approximation are not available in closed-form. We numerically solve for the leading term in the value function approximation and use the regularity properties of the so-called \emph{risk tolerance} function to compute the remaining higher order expansion terms. The numerical estimates for the optimal portfolio strategy are derived similarly. 

We show that the leading terms in the expansion of value function and optimal strategy are related to the solution of our problem in a lognormal asset pricing model with constant volatility. The optimal strategy in this case suggests to liquidate the risky position when portfolio wealth approaches its maximum value. Also, close to the drawdown constraint, the optimal strategy instructs to steadily build up a position in the risky asset to drive away the portfolio value from the lower barrier. The stochastic volatility correction term for the value function suggests very small loss or gain due to the uncertainty in volatility. However, we observe that depending on the current level of stochastic volatility, the optimal strategy with volatility correction is remarkably different than the case with constant volatility. Close to the maximum wealth value, the corrected optimal strategy suggests to hold onto the risky assets longer than in the constant volatility case. This clearly illustrates the impact of stochastic volatility on the optimal investment strategy. However, near the drawdown barrier, the behavior of corrected optimal strategy depends on the level of current stochastic volatility in the model when compared to the optimal strategy in the constant volatility case. 

\subsection{Organization}
In Section \ref{sec:problem formulation} we introduce the continuous-time model setting and formulate the problem. We derive the HJB equation for the optimal portfolio problem and give the analytical formula for the optimal portfolio strategy in terms of the value function. We provide the approximation formulas for the value function and optimal portfolio strategy in Section \ref{sec:main results} and summarize our main results. In Section \ref{sec:numerics}, we discuss the numerical implementation of our results and provide practical insights with the help of popular numerical examples considered in the literature. Section \ref{sec:conclusion} concludes the article and suggests directions for future research. The proofs are included in Appendix \ref{sec:proofs}.

\section{Problem Formulation}
\label{sec:problem formulation}
We consider a complete filtered probability space $(\Omega,\cF, \{\cF_t\}_{t \geq 0}, \P)$ endowed with a two dimensional Brownian motion $W = \bigl((W^{(1)}_t,W^{(2)}_t), 0 \leq t \leq T \bigr)$ and suppose there is a risky asset whose dynamics under $\P$ is given by the following local stochastic volatility (LSV) model: 
\eqstar{
\frac{\dd S_t}{S_t} &= \tilde{\mu}(S_t,Y_t)\dd t + \tilde{\sigma}(S_t,Y_t) \dd B^{(1)}_t,\\
\dd Y_t &= c(Y_t) \dd t + \beta(Y_t) \dd B^{(2)}_t,
}
where $B^{(1)}_t := W^{(1)}_t $ and $B^{(2)}_t := \rho W^{(1)}_t + \sqrt{1 - \rho^2} W^{(2)}_t$ are standard Brownian motions under measure $\P$ with correlation $\rho \in [-1,1]: \dd\langle B^{(1)}_t B^{(2)}_t \rangle = \rho \dd t.$ From It\^{o}'s formula, the log price process $X = \log S$ is described as following: 
\eqstar{
\dd X_t &= b(X_t,Y_t) \dd t + \sigma(X_t,Y_t)\dd B^{(1)}_t,
}
where $\mu(X_t,Y_t) :=\tilde{\mu}(\ee^{X_t},Y_t), \sigma(X_t,Y_t) := \tilde{\sigma}(\ee^{X_t},Y_t)$ and 
\eqstar{
b(X_t,Y_t) := \mu(X_t,Y_t) - \frac{1}{2} \sigma^2(X_t,Y_t).
} 
We assume that the model coefficient functions $\mu, \sigma, c$ and $\beta$ are Borel-measurable and possess sufficient regularity to ensure that a unique strong solution exists for $(X,Y)$ which is adapted to the augmentation $\mathbb{F} = \{\cF_t: 0 \leq t \leq T \}$ of the filtration generated by $W.$ 

Further, we suppose the existence of a frictionless financial market with the price of a single risky asset given by $S$ and the risk-free rate of interest given by a scalar constant $r > 0.$ In this market, we denote the wealth process of an investor by $\bar{L}$ who invests $\bar{\pi}_t$ units of currency in risky asset $S$ at time $t$ and the remaining $(\bar{L}_t - \bar{\pi}_t)$ units of currency in the risk-free bank account. Then, the self-financing portfolio, $\bar{L}$ satisfies the following stochastic differential equation (SDE)
\eqstar{
\dd\bar{L}_t &= r (\bar{L}_t - \bar{\pi}_t) \dd t + \bar{\pi}_t \frac{\dd S_t}{S_t}\\
&= \left(r \bar{L}_t + \bar{\pi}_t (\mu(X_t,Y_t) - r)\right) \dd t + \bar{\pi}_t \sigma(X_t,Y_t)\, \dd B^{(1)}_t.
}

The running maximum wealth in time $t$ dollars is given by $\bar{M}_t := \max\{\bar{L}_s \ee^{r(t-s)};s \leq t\}.$ In this work, we propose an investment framework that encourages exiting the market in the face of a sizable drawdown, while also targeting a benchmark that is related to the running maximum, or high watermark of the investment performance. The investor's risk preferences are given by a utility function $U$ satisfying: 
\assumption{
\label{ass:utilfunc}
The terminal utility function $U: (\alpha,1) \to \R,$ is smooth: $U \in \cC^\infty(\alpha,1).$ It is also strictly increasing and strictly concave.
}

We solve the utility maximization problem at finite $T > 0$  with the \emph{drawdown constraint}: 

$$\bar{L}_t \geq \alpha \bar{M}_t \text{ a.s., }\quad 0 \leq t \leq T,\quad \mbox{where $\alpha \in (0,1)$ is a fixed drawdown parameter.}$$

\subsection{The discounted formulation}
\label{sec:discounted formulation}
We look to formulate the problem in the setting where the wealth process is discounted with respect to the risk-free rate of interest. This allows us to clearly study the impact of stochastic volatility on the optimal strategy and value function. For this purpose, we define, $L_t := \bar{L}_t \ee^{-rt}$ and $M_t := \bar{M}_t\ee^{-rt} = \max\{L_s; s\leq t\}.$ The discounted wealth process satisfies the following SDE
\eqstar{
\dd L_t &= \pi_t \bigl( (\mu(X_t,Y_t)- r) \dd t + \sigma(X_t, Y_t) \dd B^{(1)}_t \bigr),
}
where $\pi_t := \ee^{-rt}\bar{\pi}_t$ is the risky-asset trading strategy. 

Now, we are ready to express the investor's utility maximization problem by defining the value function 
\eqlnostar{eq:mainprobBStrans}{
V(t,l,m,x,y) = \sup_{\pi \in \Pi} \E \left[U\left( \frac{L_T}{M_T}\right) \Big\vert L_t = l, M_t = m, X_t = x, Y_t = y \right], 
}
 where the admissible strategies are given by
\eqstar{
\Pi_{\alpha,t,l,m} := &\bigl\{ \pi: \text{measurable }, \F-\text{adapted}, \E_{t,l,m,x,y}\int^T_t \pi^2_s \sigma^2(X_s,Y_s)\dd s < \infty,\\
& \text{ s.t. }  L_s \geq \alpha M_s > 0\text{ a.s.}, t \leq s \leq T \bigr\}.
} 
We define the domain in  $\R_+ \times \R^4$ as $[0,T) \times \tilO_{\alpha}$ where $$\tilO_{\alpha}:= \{(l,m,x,y): 0 < \alpha m < l < m\}.$$  
The above value function $V$ is defined for any $5-$tuple $(t,l,m,x,y) \in [0,T] \times \overline{\tilO}_{\alpha}$. 

We recall that $\dd M = 0$ on $\{t \geq 0 | M_t \neq L_t \}.$ Then, for $(t,l,m,x,y) \in [0,T) \times \tilO_{\alpha}$ and $V \in C^{1,2,1,2,2} ([0,T] \times \overline{\tilO}_{\alpha}),$ following the usual dynamic programming principle (see, for example, \citet[Chapter 3 ]{pham2009continuous}), we obtain the Hamilton-Jacobi-Bellman (HJB) equation
\eqstar{
(\del_t + \cA) V + \sup_{\pi \in \R} \cA^{\pi} V = 0,
}
where $(\cA + \cA^{\pi})$ is the generator of the process $(X,Y,L)$ with
\eqstar{
\cA &= b(x,y)\frac{\del}{\del x} + c(y) \frac{\del}{\del y} + \frac{1}{2} \sigma^2(x,y)\frac{\del^2}{\del x^2} + \frac{1}{2}\beta^2(y) \frac{\del^2}{\del y^2} + \sigma(x,y)\beta(y)\rho \frac{\del^2}{\del x \del y},\\
\cA^{\pi} &= \pi \Bigl[(\mu(x,y)-r) \frac{\del}{\del l}  + \sigma^2(x,y) \frac{\del^2}{\del x\del l} + \rho\sigma(x,y)\beta(y)\frac{\del^2}{\del y \del l}\Bigr]+ \frac{1}{2}\pi^2 \sigma^2(x,y) \frac{\del^2}{\del l^2}.
}
By inspecting the quadratic expression above in $\pi,$ it is clear that the optimal strategy $\pi^* := \underset{{\pi \in \R}}{\arg \max} \cA^{\pi} V$ is given as 
\eqlnostar{eq:optportfolio}{
\pi^*= -\frac{(\mu(x,y)-r) V_l + \rho \beta(y) \sigma(x,y) V_{yl} + \sigma^2(x,y) V_{xl}}{\sigma^2(x,y) V_{ll}},
}
where the subscripts indicate partial derivatives. The HJB equation becomes
\eqlnostar{eq:hjbmain1}{
(\del_t + \cA) V + \tilde{\cN}(V) = 0,
}
with the nonlinear term given as
\eqstar{
\tilde{\cN}(V) &= -\frac{1}{2V_{ll}}\Bigl(\lambda(x,y) V_l + \sigma(x,y) V_{xl} + \rho\beta(y) V_{yl}\Bigr)^2,
}
where 
\eqlnostar{eq:definition sharpe ratio}{
\lambda(x,y) := \frac{\mu(x,y) - r}{\sigma(x,y)}
} 
is the Sharpe ratio function. The boundary conditions are 
\eqlnostar{eq:hjbbound1}{
\text{(Terminal condition): }  \qquad &V(T,l,m,x,y) = U\left( \frac{l}{m}\right), \qquad \\
\label{eq:hjbbound2}
\text{(Neumann condition): }   \qquad &V_m(t,m,m,x,y) = 0, \qquad\\
\label{eq:hjbbound newportfolio}
\text{(Drawdown Dirichlet condition): } \qquad &V(t,\alpha m, m, x, y) = U(\alpha).
}
The above Dirichlet condition signifies that when the drawdown constraint is hit, the investor stops trading in the risky asset $(\pi_t = 0).$ In the discounted formulation when the investor stops trading, it signifies that the wealth process stops varying and the investor accepts the utility which is given at the drawdown barrier.

\subsection{Dimensionality reduction}
\label{sec:dimension reduction}
The nonlinear PDE in \eqref{eq:hjbmain1} with boundary conditions \eqref{eq:hjbbound1}, \eqref{eq:hjbbound2} and \eqref{eq:hjbbound newportfolio} is difficult to solve numerically because the domain $\tilO_\alpha$ is a wedge in $(L,M)$ space requiring a non-rectangular finite-difference grid. However, we notice that given the structure of our problem, we could perform a change of variable which reduces the dimensionality of the problem. We introduce 
\eqstar{
\xi = \frac{l}{m}, \quad \text{ and define } \quad Q(t,\xi,x,y) := V(t,l,m,x,y),
}
which results in a new non-linear PDE for $Q \in C^{1,2,2,2}\bigl([0,T] \times [\alpha,1] \times \R^2\bigr):$
\eqlnostar{eq:hjbmain2}{
(\del_{t} + \cA) Q + \cN(Q) = 0, \text{ on }  [0,T) \times (\alpha ,1) \times \R^2,
}
where
\eqstar{
\cN(Q) = -\frac{1}{2Q_{\xi \xi}}\Bigl(\lambda(x,y) Q_{\xi} + \sigma(x,y) Q_{x \xi} + \rho\beta(y) Q_{y\xi}\Bigr)^2,
}
and the boundary conditions are
\eqlnostar{fullbcns}{
Q(T,\xi,x,y) = U\left( \xi \right), \quad Q_\xi(t,1,x,y) = 0, \quad Q(t,\alpha,x,y) = U(\alpha). 
}
Apart from providing a reduction in dimensionality, the above change of variable also transforms the problem domain from a high-dimensional cone to a semi-rectangular domain which typically helps to get more accurate numerical estimates for the solution.

\section{Value Function and Optimal Strategy Approximation}
\label{sec:main results}
Even under the constant volatility lognormal asset model, no closed form solution is available for the nonlinear PDE \eqref{eq:hjbmain2} and one needs to rely on accurate numerical approximations. In this paper, we propose to find an approximation for the value function as 
\eqlnostar{eq:approx solution}{
Q = Q^{(0)} + Q^{(1)} + Q^{(2)} + \ldots, 
}
as well as an approximation for the optimal investment strategy 
\eqlnostar{eq:approx strategy}{
\pi^* = \pi_0 + \pi_1 + \pi_2 + \ldots,
}
by using the coefficient expansion technique. This approach has been developed for the linear European option pricing problem in a general LSV model setting by \citet{lorig2015explicit}, and for the classical (unconstrained) Merton problem by \citet{lorig2015portfolio}.

\subsection{Coefficient polynomial expansions}
The main idea of the coefficient expansion technique is to first fix a point $(\bar{x},\bar{y})\in \R^2$ and then for any function $\chi(x,y),$ which is locally analytic around $(\bar{x},\bar{y}),$ define the following family of functions indexed by $a \in [0,1]:$
\eqstar{
\chi^a(x,y) := \sum^{\infty}_{n=0}a^n \chi_n(x,y)
}
where 
\eqstar{
\chi_n(x,y) := \sum^n_{k=0} \chi_{n-k,k}(x-\bar{x})^{n-k} (y - \bar{y})^k, \quad \chi_{n-k,k} := \frac{1}{(n-k)!k!}\frac{\del^{n-k}}{\del x^{n-k}}\frac{\del^k}{\del y^k}\chi(x,y)\Big \vert_{x=\bar{x},y=\bar{y}}.
}
Note that for $n=0$, $\chi_0 := \chi_{0,0} = \chi(\bar{x},\bar{y})$ is a constant. We can observe that $\chi^a \Big \vert_{a=1}$ is the Taylor series expansion of $\chi$ about the point $(\bar{x},\bar{y}).$ Here, $a$ is seen as a \emph{perturbation parameter} which is used to identify the successive terms in the approximation. 

To apply this technique in PDE \eqref{eq:hjbmain2}, we first replace each of the coefficient functions 
\[ \chi\in\{b,c,\sigma^2,\beta^2,\sigma\beta,\lambda,\sigma,\beta\} \]
with their respective series expansion for some $a \in (0,1)$ and $(\bar{x},\bar{y}) \in \R^2$. Next, to obtain approximations as in \eqref{eq:approx solution} and \eqref{eq:approx strategy}, we define a series expansion of value function as $Q = Q^a =\sum^{\infty}_{n=0} a^n Q^{(n)},$ linear operator $\cA = \cA^a =  \sum^{\infty}_{n=0} a^n \cA_n$ and replace the non-linear operator $\cN(Q)$ by  $\cN^a(Q^a)$ which involves series expansions for the coefficient functions and the value function. Then from \eqref{eq:hjbmain2}, we consider the PDE problem 

\eqlnostar{eq:hjbmain3}{
(\del_t + \cA^a) Q^a + \cN^a(Q^a) = 0, \text{ on }  [0,T) \times (\alpha ,1) \times \R^2,
}
with the boundary conditions 
\eqlnostar{eq:hjbbound8}{
Q^a(T,\xi,x,y) = U\left( \xi \right), \quad Q^a_\xi(t,1,x,y) = 0, \quad
Q^a(t,\alpha,x,y) = U(\alpha).
}
Now, to obtain the successive terms of approximation in expansions \eqref{eq:approx solution} and \eqref{eq:approx strategy}, we compare the corresponding degree terms in the polynomial of perturbation parameter $a$ in \eqref{eq:hjbmain3} and the boundary conditions \eqref{eq:hjbbound8}. The approximations are then obtained by setting $a=1.$

\subsection{Zeroth and first order approximation}
\label{sec:zeroth and first}
The first term in the approximation \eqref{eq:approx solution} is obtained by collecting the zeroth order terms w.r.t. $a$ in the expansion of \eqref{eq:hjbmain3}. We get
\eqlnostar{eq:hjbzeroth}{
(\del_t + \cA_0) Q^{(0)} -\frac{1}{2Q^{(0)}_{\xi \xi}}\Bigl(\lambda_0 Q^{(0)}_{\xi} + \rho\beta_0 Q^{(0)}_{y\xi}\Bigr)^2 = 0,
}
with
\eqlnostar{eq:defA0}{
\cA_0 := b_0\frac{\del}{\del x} + c_0 \frac{\del}{\del y} + \frac{1}{2} \sigma^2_0\frac{\del^2}{\del x^2} + \frac{1}{2}\beta^2_0 \frac{\del^2}{\del y^2} + \rho \sigma_0\beta_0 \frac{\del^2}{\del x \del y},
}
and the corresponding order boundary conditions are 
\eqlnostar{eq:hjbzeroth_bound1}{
Q^{(0)}(T,\xi,x,y) = U( \xi ),\quad
Q^{(0)}_\xi(t,1,x,y) = 0,\quad
Q^{(0)}(t,\alpha,x,y) = U(\alpha).
}
As the linear operator $\cA_0$ has only constant coefficients and the boundary conditions do not depend on $(x,y)$, the solution $Q^{(0)}(t,\xi,x,y)$ is independent of $(x,y).$  Therefore, in this case we get:
\begin{definition}
\label{thm:zeroth order} 
The leading order term $Q^{(0)} = Q^{(0)}(t,\xi)$ satisfies the following nonlinear PDE
\eqlnostar{eq:hjbzeroth1}{
Q^{(0)}_t  -\frac{1}{2}\lambda_0^2 \frac{\bigl(Q^{(0)}_{\xi}\bigr)^2}{Q^{(0)}_{\xi \xi}}  = 0, \text{ on } [0,T) \times (\alpha,1),
}
with the boundary conditions 
\eqlnostar{eq:hjbzeroth_bound3}{
Q^{(0)}(T,\xi) = U( \xi ),\quad
Q^{(0)}(t,\alpha) = U(\alpha), \quad
Q^{(0)}_\xi(t,1) = 0.
}
\end{definition}
It can be seen (and also shown later) that the zeroth order term $Q^{(0)}$ actually corresponds to the value function of our investor problem which arises in the case of a constant volatility and growth rate lognormal asset price market model, with constant Sharpe ratio $\lambda_0$. Due to the presence of boundary conditions, an explicit formula for $Q^{(0)}$ is inaccessible, even for a power utility function, and we estimate the quantity through numerical techniques. This is explained in detail in Section \ref{sec:numerics}. 

\begin{assumption}\label{reg-assump} 
We assume throughout that the PDE problem \eqref{eq:hjbzeroth1}-\eqref{eq:hjbzeroth_bound3} has a unique classical solution $Q^{(0)} \in C_b^{1,5}([0,T) \times [\alpha,1])$,
that is $Q^{(0)}$ has at least five derivatives in $\xi$ which are continuous and bounded up to the boundaries at $\xi=\alpha,1$.
\end{assumption}
In the unconstrained case, with no drawdown restrictions, the PDE \eqref{eq:hjbzeroth1} is simply the constant Sharpe ratio Merton value function PDE on the half-space $\xi>0$, where $\xi$ would denote the wealth level. As is well-known, given a smooth and strictly concave utility function satisfying the usual conditions ($U'(0^+)=\infty$ and $U'(\infty)=0$), smoothness of the value function follows from Legendre transform to a linear parabolic PDE. In our restricted drawdown problem we assume regularity of the solution when restricted to a finite domain. Our value function approximation, summarized in Section \ref{summary}, and our optimal portfolio approximation in Section \ref{optport}, are given in terms of (up to $5$th order) partial derivatives of $Q^{(0)}$.

In order to find the first order correction term, we introduce the following \emph{risk tolerance} function 
\eqlnostar{eq:define risk tolerance}{
\cR(t,\xi) := \left(-\frac{Q^{(0)}_\xi}{Q^{(0)}_{\xi\xi}}\right)(t,\xi).
}
This function has been well studied in the unconstrained case by \citet{kallblad2014qualitative} and has been recently used to study the classical Merton problem in a stochastic volatility environment by \citet{fouque2015portfolio}. It satisfies an autonomous PDE of fast-diffusion type:
\begin{proposition}
\label{thm:risk tolerance}
The risk tolerance function $\cR(t,\xi)$ satisfies the nonlinear PDE
\eqlnostar{eq:risk tol pde}{
\cR_t + \frac{1}{2}\lambda_0^2 \cR^2 \cR_{\xi \xi} = 0, \text{ on } [0,T) \times (\alpha,1),
}
with the boundary conditions
\eqlnostar{eq:R_bound1}{
\cR(T,\xi) = -\frac{U^\prime(\xi)}{U^{\prime\prime}(\xi)},\quad
\cR(t,\alpha) = 0,\quad
\cR(t,1) = 0.
}
\end{proposition}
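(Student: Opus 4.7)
The plan is a direct verification: differentiate the PDE \eqref{eq:hjbzeroth1} twice in $\xi$, express everything in terms of $\cR$, and check the boundary conditions separately. Throughout, write $u=Q^{(0)}$ for brevity, so that by Assumption \ref{reg-assump} $u$ is smooth enough on $[0,T)\times[\alpha,1]$ to justify all the differentiations, and note that $u_{\xi\xi}<0$ by strict concavity inherited from $U$, so the quotient defining $\cR$ is well-defined.

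First I would rewrite \eqref{eq:hjbzeroth1} in the convenient form $u_t=-\tfrac{1}{2}\lambda_0^2\, u_\xi\,\cR$ using $\cR u_{\xi\xi}=-u_\xi$. Differentiating once in $\xi$ and eliminating $u_{\xi\xi}$ via the same identity yields
\begin{align*}
u_{t\xi}=\tfrac{1}{2}\lambda_0^2\, u_\xi\,(1-\cR_\xi),
\end{align*}
and differentiating a second time gives
\begin{align*}
u_{t\xi\xi}=\tfrac{1}{2}\lambda_0^2\bigl[u_{\xi\xi}(1-\cR_\xi)-u_\xi \cR_{\xi\xi}\bigr].
\end{align*}
On the other hand, from the definition \eqref{eq:define risk tolerance},
\begin{align*}
\cR_t=-\frac{u_{t\xi}}{u_{\xi\xi}}+\frac{u_\xi\, u_{t\xi\xi}}{u_{\xi\xi}^2}.
\end{align*}
Plugging in the two expressions above and repeatedly using $u_\xi/u_{\xi\xi}=-\cR$, the two $(1-\cR_\xi)$ contributions cancel and one is left with $\cR_t=-\tfrac{1}{2}\lambda_0^2\,\cR^2\,\cR_{\xi\xi}$, which is \eqref{eq:risk tol pde}.

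For the boundary data: at $t=T$ one has $u(T,\xi)=U(\xi)$, so $\cR(T,\xi)=-U'(\xi)/U''(\xi)$ directly. At $\xi=1$ the Neumann condition $u_\xi(t,1)=0$ from \eqref{eq:hjbzeroth_bound3} and the nondegeneracy of $u_{\xi\xi}$ give $\cR(t,1)=0$. The slightly subtle boundary is $\xi=\alpha$: the Dirichlet condition $u(t,\alpha)=U(\alpha)$ is constant in $t$, so $u_t(t,\alpha)=0$; evaluating the PDE \eqref{eq:hjbzeroth1} at $\xi=\alpha$ forces $u_\xi(t,\alpha)^2/u_{\xi\xi}(t,\alpha)=0$, and since $u_{\xi\xi}$ is bounded this yields $u_\xi(t,\alpha)=0$ and hence $\cR(t,\alpha)=0$.

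The main obstacle is purely bookkeeping in Step 2, keeping track of the cancellations between the $(1-\cR_\xi)$ terms; the regularity required (derivatives of $u$ up to order three, with $u_{\xi\xi}$ bounded away from zero on the boundary) is exactly what is guaranteed by Assumption \ref{reg-assump}. No further substantive ingredient beyond the PDE \eqref{eq:hjbzeroth1}, its boundary conditions, and the definition of $\cR$ is needed.
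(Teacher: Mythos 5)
Your derivation of the PDE itself is correct and is essentially the paper's own computation: both start from $Q^{(0)}_t=\tfrac12\lambda_0^2\,Q^{(0)}_\xi\,Q^{(0)}_\xi/Q^{(0)}_{\xi\xi}=-\tfrac12\lambda_0^2\,Q^{(0)}_\xi\,\cR$, differentiate twice in $\xi$, substitute into $\cR_t=-Q^{(0)}_{t\xi}/Q^{(0)}_{\xi\xi}+Q^{(0)}_\xi Q^{(0)}_{t\xi\xi}/(Q^{(0)}_{\xi\xi})^2$, and watch the $(1-\cR_\xi)$ terms cancel. The terminal condition is handled identically. At $\xi=1$ you argue directly from the Neumann condition $Q^{(0)}_\xi(t,1)=0$ plus nondegeneracy of $Q^{(0)}_{\xi\xi}$, whereas the paper invokes the financial interpretation ($\pi_0=\mathrm{const}\times\cR$ must vanish at the running maximum); your route is more self-contained but does quietly assume $Q^{(0)}_{\xi\xi}(t,1)\neq 0$, which Assumption \ref{reg-assump} does not actually provide.

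The genuine problem is your treatment of the boundary $\xi=\alpha$. From $u_t(t,\alpha)=0$ and the PDE you correctly get $u_\xi(t,\alpha)^2/u_{\xi\xi}(t,\alpha)=0$, but you then conclude that $u_\xi(t,\alpha)=0$. That intermediate claim is false for this problem: since $Q^{(0)}(t,\cdot)$ is nondecreasing and concave (Lemma \ref{lem:zeroth order properties}), $Q^{(0)}_\xi$ is nonnegative and nonincreasing in $\xi$, so $Q^{(0)}_\xi(t,\alpha)=0$ would force $Q^{(0)}_\xi\equiv 0$ on $[\alpha,1]$ and hence $Q^{(0)}(t,\cdot)\equiv U(\alpha)$, contradicting the fact that the do-nothing strategy already yields $Q^{(0)}(t,\xi)\geq U(\xi)>U(\alpha)$ for $\xi>\alpha$. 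The paper resolves the dichotomy the other way: writing $u_\xi^2/u_{\xi\xi}=-\cR\, u_\xi$, one has $\cR\, u_\xi=0$ at $\xi=\alpha$ with $u_\xi(t,\alpha)\neq 0$, hence $\cR(t,\alpha)=0$ (which implicitly means $u_{\xi\xi}$ blows up at $\xi=\alpha$ rather than staying bounded, so your appeal to boundedness of $u_{\xi\xi}$ there is exactly the premise that fails). The conclusion $\cR(t,\alpha)=0$ survives, but you should replace the step ``$u_{\xi\xi}$ bounded $\Rightarrow u_\xi(t,\alpha)=0$'' with the factorization $\cR\,u_\xi=0$ and the observation that $u_\xi(t,\alpha)>0$.
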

The proof is given in Appendix \ref{Prop2proof}.

As we show later in Section \ref{sec:optimal strategy}, Proposition \ref{thm:risk tolerance} is also crucial to compute the leading order terms in the approximation of optimal strategy $\pi^*.$ Next, we define the differential operators 
\eqlnostar{eq:def diff operator}{
\cD_k := \cR^k \frac{\del^k}{\del \xi^k}, \quad k = 1,2,\ldots, 
}
which allows us to write equation \eqref{eq:hjbzeroth1} as 
\eqlnostar{eq:hjbzeroth transform1}{
\Bigl( \partial_t +  \frac{\lambda_0^2}{2} \cD_2 + \lambda_0^2 \cD_1\Bigr) Q^{(0)} = 0.
}
Next, we collect the first order terms w.r.t. $a$ in the expansion \eqref{eq:hjbmain3}. As $Q^{(0)}$ does not depend on $y$, the linear term contributes
\eqstar{
\left(\frac{\del}{\del t} + \mathcal{A}_0\right)Q^{(1)},
}
and the nonlinear term contributes 
\eqstar{
\lambda^2_0 \cD_1 Q^{(1)} + \frac{1}{2} \lambda^2_0\cD_2 Q^{(1)}  + \lambda_0\lambda_1 \cD_1 Q^{(0)} + \beta_0 \lambda_0 \rho \cD_1 \frac{\del}{\del y} Q^{(1)} + \sigma_0 \lambda_0 \cD_1 \frac{\del}{\del x} Q^{(1)}.
} 
\begin{definition}
\label{def:Q1}
The first order correction term $Q^{(1)}$ satisfies the following PDE 
\eqlnostar{eq:nonlinear first order}{
\left(\frac{\del}{\del t} + \cA_0 + \cB_0 \right)Q^{(1)} + S_1 = 0, \text{ on } [0,T) \times (\alpha, 1) \times \R^2,
}
with linear operator $\cB_0$ given as 
\eqlnostar{eq:defB0}{
\cB_0 := \lambda^2_0 \cD_1 + \frac{1}{2} \lambda^2_0\cD_2 + \beta_0 \lambda_0 \rho \cD_1 \frac{\del}{\del y} + \sigma_0 \lambda_0 \cD_1 \frac{\del}{\del x},
}
and the source term
\eqstar{
S_1 = \bigl(\frac{1}{2}\lambda^2\bigr)_1(x,y) \cD_1 Q^{(0)}(t,\xi).
}
The terminal and boundary conditions \eqref{eq:hjbbound8} for $Q^a$ are already satisfied by $Q^{(0)}$, and so we have
\eqlnostar{eq:hjbbound11}{
Q^{(1)}(T,\xi,x, y) = 0, \quad
Q^{(1)}_\xi(t,1,x, y) = 0, \quad
Q^{(1)}(t,\alpha, x,y) = 0.
}
\end{definition}
In Section \ref{sec:const coefficient transform}, we show that  $Q^{(1)}$ can be expressed in terms of partial derivatives of $Q^{(0)}$ and $\cR.$

\subsubsection{Explicit expression for first order correction term}
\label{sec:const coefficient transform}
We now employ a transformation that enables us to find an explicit expression for $Q^{(1)}$ in terms of partial derivatives of $Q^{(0)}$. For this purpose, we first note that $Q^{(0)}_{\xi}$ is a monotone function from the following result on the zeroth order term. 
\begin{lemma}
\label{lem:zeroth order properties}
$Q^{(0)}(t,\xi)$ is a non-decreasing and concave function in the $\xi$ variable.
\end{lemma}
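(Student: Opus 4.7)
The plan is to interpret $Q^{(0)}$ as the value function of the dimension-reduced investment problem in a constant-coefficient market with constant Sharpe ratio $\lambda_0$, and then to derive both monotonicity and concavity from that stochastic control representation. Under Assumption \ref{reg-assump}, a standard verification argument applied to the HJB PDE \eqref{eq:hjbzeroth1}--\eqref{eq:hjbzeroth_bound3} yields
\[
Q^{(0)}(t,\xi) \;=\; \sup_{\pi}\E\Bigl[U\bigl(L^\pi_T/M^\pi_T\bigr) \,\Big|\, L^\pi_t/M^\pi_t = \xi\Bigr],
\]
the supremum taken over admissible $\pi$ obeying the drawdown constraint $L^\pi\ge\alpha M^\pi$, with $dL^\pi_s=\sigma_0\lambda_0\pi_s\,ds+\sigma_0\pi_s\,dB^{(1)}_s$ and $M^\pi_s=\max(M_t,\sup_{u\le s}L^\pi_u)$. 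Scale invariance of the dynamics and payoff confirms the value depends only on the ratio $\xi=L/M$.

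First I would handle monotonicity by a shift coupling. Fix $\alpha\le\xi_1<\xi_2\le 1$, and let $\pi$ be $\varepsilon$-optimal for initial ratio $\xi_1$, producing $(L^1,M^1)$ from $(L_t,M_t)=(\xi_1,1)$. Applying the same $\pi$ from $(\xi_2,1)$ yields $L^2_s=L^1_s+(\xi_2-\xi_1)$ for $s\ge t$, by the affine dependence of the wealth SDE on its initial condition. A short case analysis on whether $\sup_{u\le s}L^1_u$ lies below or above $1$ shows that the drawdown slack $L^2-\alpha M^2$ exceeds $L^1-\alpha M^1$ by at least $(1-\alpha)(\xi_2-\xi_1)$, so $\pi$ remains admissible, and that $L^2_T/M^2_T\ge L^1_T/M^1_T$ pathwise. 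Monotonicity of $U$ and sending $\varepsilon\to 0$ then give $Q^{(0)}(t,\xi_2)\ge Q^{(0)}(t,\xi_1)$.

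For concavity, the natural attempt is to pick $\varepsilon$-optimal $\pi^1,\pi^2$ from $\xi_1,\xi_2$ and run $\pi:=\theta\pi^1+(1-\theta)\pi^2$ from $\theta\xi_1+(1-\theta)\xi_2$. Linearity of the wealth SDE gives $L_s=\theta L^1_s+(1-\theta)L^2_s$, and subadditivity of suprema yields $M_T\le\theta M^1_T+(1-\theta)M^2_T$, so that $L_T/M_T$ dominates a convex combination of the two ratios $L^i_T/M^i_T$ with path-dependent weights $\theta M^1_T/(\theta M^1_T+(1-\theta)M^2_T)$. The main obstacle is that these weights are random and generally differ from $\theta$, so concavity of $U$ alone does not immediately produce the target affine lower bound on $Q^{(0)}(t,\theta\xi_1+(1-\theta)\xi_2)$. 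To close the gap I would turn to a PDE argument: differentiating \eqref{eq:hjbzeroth1} twice in $\xi$ yields a quasilinear parabolic equation for $Z:=Q^{(0)}_{\xi\xi}$ with terminal datum $Z(T,\xi)=U''(\xi)<0$, and using sign information for $Z$ on the lateral boundaries $\xi=\alpha,1$ (obtained by differentiating the Dirichlet and Neumann conditions along the PDE and combining with the already-established monotonicity) the parabolic maximum/comparison principle propagates $Z\le 0$ into the interior, yielding concavity of $Q^{(0)}$ in $\xi$.
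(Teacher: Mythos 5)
Your monotonicity argument is essentially the paper's: the paper likewise identifies $Q^{(0)}$ with the value function $\cV(t,l,m)$ of the drawdown problem in the constant-coefficient model and shows, by adding $l'-l$ to the drawdown inequality, that a strategy admissible from the lower initial wealth remains admissible from the higher one and yields a pathwise larger terminal ratio; your quantitative bound on the slack is a correct rendering of that computation. This half is fine.

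The concavity half has a genuine gap. You correctly reach the point where $L_T/M_T$ for the mixed strategy dominates a convex combination of the ratios $L^{(i)}_T/M^{(i)}_T$ with \emph{random} weights, and you correctly observe that concavity of $U$ alone does not finish from there --- but you then abandon the probabilistic route exactly where the paper's actual work lies. The paper never compares against a random-weight combination: it first applies concavity of $U$ to the two ratios with the deterministic weight $\eta$, and then proves the pathwise inequality \eqref{eq:final step concavity},
\[
\frac{\eta L^{(1)}_T}{m\vee(L^{(1)})^*_T}+\frac{(1-\eta)L^{(2)}_T}{m\vee(L^{(2)})^*_T}\;\leq\;\frac{\eta L^{(1)}_T+(1-\eta)L^{(2)}_T}{m\vee\bigl(\eta L^{(1)}+(1-\eta)L^{(2)}\bigr)^*_T},
\]
by a case analysis on which running maxima exceed $m$, using that one may restrict to strategies that liquidate the risky position once the running maximum is attained (so $L^{(i)}_T=(L^{(i)})^*_T$ in the only nontrivial case); monotonicity of $U$ then closes the argument. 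Your proposed PDE substitute is not a proof as stated and faces a real obstruction: the equation for $Z=Q^{(0)}_{\xi\xi}$ obtained by differentiating \eqref{eq:hjbzeroth1} carries $Z$ in denominators, so it is a uniformly parabolic equation to which a comparison principle applies only where $Z$ is already known to be bounded away from zero --- the sign and nondegeneracy of $Q^{(0)}_{\xi\xi}$ are precisely what is to be proved, so the argument is circular without an a priori estimate. Moreover the lateral data for $Z$ are not benign: at $\xi=\alpha$ the paper's own analysis ($\cR(t,\alpha)=0$ with $Q^{(0)}_\xi(t,\alpha)\neq 0$, cf.\ the proof of Proposition \ref{thm:risk tolerance}) forces $Q^{(0)}_{\xi\xi}$ to blow up there, so ``differentiating the Dirichlet condition'' does not produce a usable boundary sign for $Z$. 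To complete the lemma you need either to carry out the pathwise inequality above or to supply a genuinely different, fully worked concavity argument.
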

The proof is given in Appendix \ref{Lemma1proof}. This result allows us to define a change of variable which is given as:
\begin{definition}
On $[0,T] \times [\alpha,1]$, define,
\eqstar{
z(t,\xi) &:= -\log Q_\xi^{(0)} (t,\xi) + \frac{1}{2} \lambda_0^2 (T-t),\\
\psi(t) := -\log Q^{(0)}_\xi (t,\alpha) + \frac{1}{2} \lambda_0^2 & (T-t),\quad \varphi(t) := -\log Q^{(0)}_\xi (t,1) + \frac{1}{2} \lambda_0^2 (T-t),
}
and let
\[ q^{(0)}(t,z(t,\xi)) := Q^{(0)}(t,\xi). \]
\end{definition}
 It is clear from the boundary condition \eqref{eq:hjbzeroth_bound3} that we have $\varphi(t) = \infty \text{ for all }  0 \leq t < T.$ Then, we obtain the following PDE problem for $q^{(0)}(t,z)$. 
\begin{proposition}
\label{thm:zeroth const pde}
$q^{(0)}(t,z)$ satisfies the following linear PDE 
$$\Bigl(\frac{\partial}{\partial t} + \frac{1}{2} \lambda^2_0\frac{\partial}{\partial z^2}\Bigr) q^{(0)} = 0,  \text{ on } [0,T) \times (\psi(t), \infty),
$$
and the terminal and boundary conditions are 
\eqstar{
q^{(0)}(T,z) = U\Bigl( \bigl(U^\prime\bigr)^{-1}\bigl( \ee^{-z}\bigr)\Bigr), \quad
\lim_{z \to \infty }q^{(0)}_z(t,z) = 0, \quad
q^{(0)}(t,\psi(t)) = U\Bigl( \bigl(U^\prime\bigr)^{-1}\bigl( \ee^{-\psi(t) + \frac{\lambda_0^2}{2}(T-t)}\bigr)\Bigr).
}
\end{proposition}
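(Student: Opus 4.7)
The plan is a direct chain-rule calculation which linearizes the nonlinear PDE \eqref{eq:hjbzeroth1} for $Q^{(0)}$ into a heat equation for $q^{(0)}$, in the spirit of the Cole--Hopf / Legendre transform for Merton-type HJB equations. First I would check that the change of variable is a smooth diffeomorphism: by Lemma~\ref{lem:zeroth order properties}, $Q^{(0)}_\xi(t,\cdot)>0$ on $[\alpha,1)$ while $Q^{(0)}_\xi(t,1)=0$ by \eqref{eq:hjbzeroth_bound3}, and $Q^{(0)}_{\xi\xi}<0$ in the interior, so $\xi\mapsto z(t,\xi)$ is a strictly increasing smooth bijection of $[\alpha,1)$ onto $[\psi(t),\infty)$ with smooth inverse $\xi(t,z)$; regularity of $q^{(0)}$ then follows from Assumption~\ref{reg-assump}.

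Next I would carry out the chain-rule computation, phrasing everything through the risk tolerance $\cR=-Q^{(0)}_\xi/Q^{(0)}_{\xi\xi}$. Implicit differentiation of $z=-\log Q^{(0)}_\xi+\tfrac{1}{2}\lambda_0^2(T-t)$ gives $z_\xi=1/\cR$, hence $\xi_z=\cR$, from which
\[
q^{(0)}_z \,=\, Q^{(0)}_\xi\,\cR \qquad\text{and}\qquad q^{(0)}_{zz} \,=\, Q^{(0)}_\xi\,\cR\,(\cR_\xi-1).
\]
For the time derivative I would rewrite \eqref{eq:hjbzeroth1} in the compact form $Q^{(0)}_t=-\tfrac{1}{2}\lambda_0^2\,\cR\,Q^{(0)}_\xi$, differentiate in $\xi$ to obtain the key identity $Q^{(0)}_{\xi t}=\tfrac{1}{2}\lambda_0^2\,Q^{(0)}_\xi(1-\cR_\xi)$, and then use implicit differentiation again to get $z_t|_\xi=-\tfrac{1}{2}\lambda_0^2(2-\cR_\xi)$ and $\xi_t|_z=\tfrac{1}{2}\lambda_0^2\,\cR\,(2-\cR_\xi)$. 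Substituting into $q^{(0)}_t=Q^{(0)}_t+Q^{(0)}_\xi\,\xi_t|_z$ then produces $q^{(0)}_t=\tfrac{1}{2}\lambda_0^2\,q^{(0)}_z(1-\cR_\xi)$, which exactly cancels $\tfrac{1}{2}\lambda_0^2\,q^{(0)}_{zz}=\tfrac{1}{2}\lambda_0^2\,q^{(0)}_z(\cR_\xi-1)$. The principal obstacle is purely algebraic bookkeeping; the mechanism that makes the transformation linearizing is the clean cancellation of $\cR$ and $\cR_\xi$ in the final equation.

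Lastly, the terminal and boundary data are obtained by substitution in $q^{(0)}(t,z(t,\xi))=Q^{(0)}(t,\xi)$. At $t=T$ one has $z=-\log U'(\xi)$, so $\xi=(U')^{-1}(\ee^{-z})$ and $q^{(0)}(T,z)=U((U')^{-1}(\ee^{-z}))$. At $\xi=1$, $Q^{(0)}_\xi(t,1)=0$ sends $z\to\infty$, and $q^{(0)}_z=\cR\,Q^{(0)}_\xi$ vanishes in this limit using both $Q^{(0)}_\xi(t,1)=0$ and $\cR(t,1)=0$ from Proposition~\ref{thm:risk tolerance}. At $\xi=\alpha$, i.e.\ $z=\psi(t)$, the definition of $\psi$ gives $\ee^{-\psi(t)+\tfrac{1}{2}\lambda_0^2(T-t)}=Q^{(0)}_\xi(t,\alpha)$, and the stated boundary value for $q^{(0)}$ at $\psi(t)$ then follows by the same correspondence used for the terminal time.
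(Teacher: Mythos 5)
Your proposal is correct and follows essentially the same route as the paper's proof: both hinge on rewriting the zeroth-order PDE as $Q^{(0)}_t=-\tfrac12\lambda_0^2\,\cR\,Q^{(0)}_\xi$, differentiating in $\xi$ to get $Q^{(0)}_{t\xi}=\tfrac12\lambda_0^2 Q^{(0)}_\xi(1-\cR_\xi)$, and combining this with $q^{(0)}_z=\cR Q^{(0)}_\xi$ and $q^{(0)}_{zz}=\cR Q^{(0)}_\xi(\cR_\xi-1)$ so that the $\cR_\xi$ terms cancel, leaving the heat equation; the boundary data are obtained by the same substitutions. The only differences are presentational (you track the inverse map $\xi(t,z)$ and $\xi_t\vert_z$ where the paper works with $z_t$ and the operators $\cD_1,\cD_2$, and you add an explicit check that $\xi\mapsto z(t,\xi)$ is a diffeomorphism, which the paper leaves implicit).
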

The proof is given in Appendix \ref{Prop1proof}.

\begin{lemma}
\label{lemma:intermediate1}
Denote $q\bigl(t,z(t,\xi),x,y\bigr) := \hat{Q}(t,\xi,x, y).$ Then, on $[0,T) \times  (\psi(t), \infty) \times \R^2,$ we have
\eqstar{
\left( \frac{\del}{\del t} + \cA_0 + \cB_0 \right) \hat{Q} = \left( \frac{\del}{\del t} + \cA_0 + \cC_0 \right) q,
}
where 
\eqlnostar{eq:defC0}{
\cC_0 = \frac{1}{2} \lambda^2_0 \frac{\del^2}{\del z^2} + \rho \beta_0 \lambda_0  \frac{\del^2}{\del y \del z} + \sigma_0\lambda_0  \frac{\del^2}{\del x \del z}.
}
\end{lemma}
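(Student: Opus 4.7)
The plan is a direct chain-rule calculation, which will reduce the whole identity to a single scalar identity that follows from differentiating the zeroth-order HJB equation \eqref{eq:hjbzeroth1} once in $\xi$. The pivotal observation is that by the very definition of $z$, we have $z_\xi = -Q^{(0)}_{\xi\xi}/Q^{(0)}_\xi = 1/\cR(t,\xi)$, so the operator $\cR\,\partial_\xi$ acts on any composite function of $z(t,\xi)$ as $\partial_z$.

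First, since $z$ does not depend on $(x,y)$ and the linear operator $\cA_0$ from \eqref{eq:defA0} involves only $(x,y)$-derivatives, $\cA_0 \hat{Q} = \cA_0 q$ trivially, so it remains to prove the reduced identity $\hat{Q}_t + \cB_0 \hat{Q} = q_t + \cC_0 q$. Applying the chain rule to each piece of $\cB_0$ in \eqref{eq:defB0} using the observation above gives $\cD_1 \hat{Q} = q_z$, $\cD_1 \partial_x \hat{Q} = q_{xz}$, $\cD_1 \partial_y \hat{Q} = q_{yz}$, while $\hat{Q}_{\xi\xi} = q_{zz} z_\xi^2 + q_z z_{\xi\xi}$ combined with the identity $\cR^2 z_{\xi\xi} = -\cR_\xi$ (obtained by differentiating $\cR z_\xi = 1$ in $\xi$) gives $\cD_2 \hat{Q} = q_{zz} - \cR_\xi\, q_z$. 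Substituting and comparing with $\cC_0$ from \eqref{eq:defC0},
$$\cB_0 \hat{Q} - \cC_0 q = \Bigl(\lambda_0^2 - \tfrac{1}{2}\lambda_0^2 \cR_\xi\Bigr) q_z.$$

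On the time-derivative side, $\hat{Q}_t = q_t + q_z z_t$, so the full claim reduces to the scalar identity $z_t + \lambda_0^2 - \tfrac{1}{2}\lambda_0^2 \cR_\xi = 0$. To prove this, I rewrite \eqref{eq:hjbzeroth1} as $Q^{(0)}_t = -\tfrac{1}{2}\lambda_0^2 Q^{(0)}_\xi \cR$ (using $\cR Q^{(0)}_{\xi\xi} = -Q^{(0)}_\xi$), differentiate once in $\xi$ to get $Q^{(0)}_{\xi t} = \tfrac{1}{2}\lambda_0^2 Q^{(0)}_\xi (1 - \cR_\xi)$, and then evaluate
$$z_t \;=\; -\frac{Q^{(0)}_{\xi t}}{Q^{(0)}_\xi} - \tfrac{1}{2}\lambda_0^2 \;=\; -\tfrac{1}{2}\lambda_0^2(1 - \cR_\xi) - \tfrac{1}{2}\lambda_0^2 \;=\; -\lambda_0^2 + \tfrac{1}{2}\lambda_0^2 \cR_\xi,$$
which closes the identity. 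The strict positivity of $Q^{(0)}_\xi$ on $[\alpha,1)$, needed to justify the divisions, follows from the monotonicity and concavity lemma on $Q^{(0)}$ together with Assumption \ref{reg-assump}.

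The main obstacle is purely bookkeeping rather than anything conceptual: one must carefully distinguish $\partial_t$ applied to $\hat{Q}$ from $\partial_t$ applied to $q$ at fixed $z$, and then recognize that the excess $(\lambda_0^2 - \tfrac{1}{2}\lambda_0^2 \cR_\xi) q_z$ left over from the spatial operators is exactly absorbed by the cross term $q_z z_t$. Once the scalar identity for $z_t$ is extracted from the zeroth-order HJB, the rest is a routine matching of terms.
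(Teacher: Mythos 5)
Your proof is correct and follows essentially the same route as the paper: the identities $\cD_1\hat{Q} = q_z$, $\cD_2\hat{Q} = q_{zz} - \cR_\xi q_z$, and the evaluation of $Q^{(0)}_{t\xi}$ (equivalently $z_t$) via the zeroth-order HJB equation are exactly the calculations in the paper's proof of Proposition \ref{thm:zeroth const pde}, to which this lemma is deferred. Your repackaging of the time-derivative bookkeeping into the single scalar identity $z_t + \lambda_0^2 - \tfrac{1}{2}\lambda_0^2\cR_\xi = 0$ is just a tidy reorganization of the same cancellation.
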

The above result follows from the calculations performed in the proof of Proposition \ref{thm:zeroth const pde} (also see \cite[Lemma 3.3]{lorig2015portfolio}). 

Next, we set $\hat{Q} = Q^{(0)}$ and $q = q^{(0)}$ in Lemma \ref{lemma:intermediate1}. Further we know that $q^{(0)}$ does not depend on $(x,y)$ and $\cA_0$ and the last two terms in $\cC_0$ have derivatives w.r.t.\ $(x,y).$ Then, we get the constant coefficient heat equation by applying the operator $\cC_0$ as in Proposition \ref{thm:zeroth const pde}. On $[0,T) \times (\psi(t),\infty)$, we have
\eqstar{
\left( \frac{\del}{\del t} + \cA_0 + \cC_0 \right) q^{(0)} = 0.
}
Finally, we define $q^{(1)}$ from $Q^{(1)}$ as
\begin{equation}
q^{(1)}(t,z(t,\xi),x,y) := Q^{(1)}(t,\xi,x,y).\label{q1def}
\end{equation}
\begin{proposition}
\label{thm:first const pde}
The alternative representation $q^{(1)}(t,z,x,y)$ of the first order correction term satisfies
\eqlnostar{eq:transformed_firstorder1}{
\left( \frac{\del}{\del t} + \cA_0 + \cC_0 \right) q^{(1)} + \cS_1 = 0, \text{ on } [0,T) \times (\psi(t), \infty) \times \R^2 ,
}
where
\eqlnostar{eq:main source term}{
\cS_1(t,z,x,y) = \bigl(\frac{1}{2} \lambda^2\bigr)_1(x,y) q^{(0)}_z(t,z,x,y).
}
The boundary conditions are 
\eqlnostar{eq:transformed_boundary1}{
q^{(1)}(T,z,x,y) = 0,\quad
q^{(1)}(t,\psi(t),x,y) = 0, \quad
\lim_{z\to \infty}q^{(1)}_z(t,z,x,y) = 0.
}
\end{proposition}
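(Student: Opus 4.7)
The plan is a change-of-variables argument: I will use the map $\xi \mapsto z(t,\xi) = -\log Q^{(0)}_\xi(t,\xi) + \tfrac12 \lambda_0^2(T-t)$, invoke Lemma \ref{lemma:intermediate1} to exchange the operator $\cB_0$ (which contains the nonlinear derivatives $\cD_k$) for the constant-coefficient operator $\cC_0$, and then check that the source term and boundary data transform correctly.

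First I would compute how the operator $\cD_1 = \cR \, \partial_\xi$ acts on $Q^{(0)}$. Because $z_\xi = -Q^{(0)}_{\xi\xi}/Q^{(0)}_\xi = 1/\cR(t,\xi)$ by the definition \eqref{eq:define risk tolerance}, and $Q^{(0)}(t,\xi) = q^{(0)}(t,z(t,\xi))$, the chain rule gives $Q^{(0)}_\xi = q^{(0)}_z \cdot z_\xi = q^{(0)}_z/\cR$, whence
\eqstar{
\cD_1 Q^{(0)} \;=\; \cR\, Q^{(0)}_\xi \;=\; q^{(0)}_z(t,z).
}
This immediately rewrites the source term $S_1$ from Definition \ref{def:Q1} as $\cS_1(t,z,x,y) = (\tfrac12 \lambda^2)_1(x,y)\, q^{(0)}_z(t,z,x,y)$, which is the source claimed in \eqref{eq:main source term}.

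Next, I would use \eqref{q1def} together with Lemma \ref{lemma:intermediate1} applied to $\hat{Q}=Q^{(1)}$ and $q=q^{(1)}$ to obtain
\eqstar{
\bigl(\partial_t + \cA_0 + \cB_0\bigr) Q^{(1)} \;=\; \bigl(\partial_t + \cA_0 + \cC_0\bigr) q^{(1)}.
}
Combining this identity with the PDE \eqref{eq:nonlinear first order} that $Q^{(1)}$ satisfies and with the identification of the source term gives \eqref{eq:transformed_firstorder1} directly on the transformed domain $[0,T) \times (\psi(t),\infty) \times \R^2$, where $\psi(t) = z(t,\alpha)$ by definition and $z(t,\xi) \to \infty$ as $\xi \to 1$ because $Q^{(0)}_\xi(t,1)=0$ by Definition \ref{thm:zeroth order}.

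Finally I would translate the boundary conditions \eqref{eq:hjbbound11} through the substitution. Since $Q^{(1)}(T,\xi,x,y) \equiv 0$ in $\xi$, we get $q^{(1)}(T,z,x,y) = 0$ for every $z$ in the image of $z(T,\cdot)$. Since $\psi(t) = z(t,\alpha)$, the Dirichlet condition $Q^{(1)}(t,\alpha,x,y)=0$ becomes $q^{(1)}(t,\psi(t),x,y) = 0$. The Neumann condition at $\xi=1$ is the most delicate point: from $Q^{(1)}_\xi = q^{(1)}_z / \cR$, we have $q^{(1)}_z = Q^{(1)}_\xi \cdot \cR$; since both $Q^{(1)}_\xi(t,1,x,y)=0$ by \eqref{eq:hjbbound11} and $\cR(t,1)=0$ by \eqref{eq:R_bound1}, we obtain $\lim_{z\to\infty} q^{(1)}_z(t,z,x,y) = 0$. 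The main obstacle is exactly this last step, i.e.\ ensuring the limit is well-defined and taking the value zero, which requires the regularity of $Q^{(1)}_\xi$ and $\cR$ near $\xi=1$ guaranteed by Assumption \ref{reg-assump}.
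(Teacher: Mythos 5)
Your proposal is correct and follows essentially the same route the paper intends: it invokes Lemma \ref{lemma:intermediate1} with $\hat{Q}=Q^{(1)}$, $q=q^{(1)}$ to convert $\cB_0$ into $\cC_0$, uses the identity $\cD_1 Q^{(0)} = q^{(0)}_z$ (the paper's equation \eqref{derivs}) to rewrite the source term, and transports the boundary data through $z(t,\xi)$, which is exactly how the paper justifies the proposition when it states that it ``follows from Definition \ref{def:Q1}.'' Your explicit handling of the Neumann condition via $q^{(1)}_z = \cR\, Q^{(1)}_\xi$ with both factors vanishing at $\xi=1$ is a slightly more careful spelling-out of the step the paper leaves implicit, and is consistent with the regularity assumed in Assumption \ref{reg-assump}.
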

The above result follows from Definition \ref{def:Q1}. The solution to \eqref{eq:transformed_firstorder1} with boundary conditions \eqref{eq:transformed_boundary1} is given in terms of derivatives of $q^{(0)}$ in the following proposition.

\begin{proposition}
\label{prop:solution_firstorder}
The solution of the PDE in \eqref{eq:transformed_firstorder1} with boundary conditions \eqref{eq:transformed_boundary1} is given by 
\eqlnostar{eq:first correction term}{
q^{(1)}(t,z,x,y) = (T-t)\lambda_0 A(t,x,y)q^{(0)}_z(t,z) + \frac{1}{2}(T-t)^2 \lambda_0 B q^{(0)}_{zz}(t,z),
}
where
\eqstar{
A(t,x,y) &= \lambda_{1,0} \left[(x - \bar{x}) + \frac{1}{2}(T-t)b_0 \right] + \lambda_{0,1} \left[(y - \bar{y}) + \frac{1}{2}(T-t)c_0\right],\\
B &= \lambda_{1,0}\sigma_0\lambda_0 + \lambda_{0,1} \rho \beta_0 \lambda_0.
}
In the original variables, $Q^{(1)}$, the solution of \eqref{eq:nonlinear first order} with terminal and boundary conditions \eqref{eq:hjbbound11}, is given by
\eqlnostar{eq:first order final}{
Q^{(1)}(t,\xi,x,y) = (T-t)\lambda_0 A(t,x,y)\cD_1 Q^{(0)} +  \frac{1}{2}(T-t)^2 \lambda_0 B \left(\cD_3 -2 \cD_1\right)Q^{(0)}.
}
\end{proposition}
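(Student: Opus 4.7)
The plan is to exhibit an explicit solution by making an ansatz in the flattened $z$-coordinate and then translating back to $\xi$. The source term $\cS_1 = \bigl(\tfrac{1}{2}\lambda^2\bigr)_1(x,y)\, q^{(0)}_z$ is affine in $(x-\bar x,\, y-\bar y)$ and multiplies $q^{(0)}_z$, which suggests trying
\[
q^{(1)}(t,z,x,y) = f(t,x,y)\, q^{(0)}_z(t,z) + g(t,x,y)\, q^{(0)}_{zz}(t,z),
\]
with $f,g$ vanishing at $t=T$. Differentiating the heat equation in Proposition~\ref{thm:zeroth const pde} once and twice in $z$ shows that $q^{(0)}_z$ and $q^{(0)}_{zz}$ solve it as well, so the $\tfrac{1}{2}\lambda_0^2\partial_{zz}$ part of $\cC_0$ exactly cancels the $q^{(0)}_{zzz}$ and $q^{(0)}_{zzzz}$ terms produced by $\partial_t$ acting on the $z$-derivatives in the ansatz.

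After this cancellation, matching the coefficients of $q^{(0)}_z$, $q^{(0)}_{zz}$ and $q^{(0)}_{zzz}$ in the identity $(\partial_t + \cA_0 + \cC_0) q^{(1)} = -\cS_1$ yields
\[
(\partial_t + \cA_0) f = -\lambda_0\bigl[\lambda_{1,0}(x-\bar x) + \lambda_{0,1}(y-\bar y)\bigr], \qquad (\partial_t + \cA_0) g + \lambda_0(\sigma_0 f_x + \rho\beta_0 f_y) = 0,
\]
together with $\sigma_0 g_x + \rho\beta_0 g_y = 0$. The first equation has an affine right-hand side, and I would try $f = (T-t)\lambda_0\, A(t,x,y)$: the shifts $\tfrac{1}{2}(T-t)b_0$ and $\tfrac{1}{2}(T-t)c_0$ inside $A$ are chosen exactly to absorb the drift part $b_0\partial_x + c_0\partial_y$ of $\cA_0$, while the second-order terms of $\cA_0$ annihilate $f$ because it is affine in $(x,y)$. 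Substituting this $f$ into the equation for $g$ reduces $\lambda_0(\sigma_0 f_x + \rho\beta_0 f_y)$ to $(T-t)\lambda_0 B$, a function of $t$ alone, so $g$ is independent of $(x,y)$ and a direct time-integration against $g(T)=0$ gives $g = \tfrac{1}{2}(T-t)^2 \lambda_0 B$; the remaining constraint $g_x=g_y=0$ then holds automatically.

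The boundary conditions \eqref{eq:transformed_boundary1} follow next: the $(T-t)$ and $(T-t)^2$ prefactors kill $q^{(1)}$ at $t=T$; using $q^{(0)}_z = \cR\, Q^{(0)}_\xi$ (a consequence of $z_\xi = 1/\cR$) together with $\cR(t,\alpha)=0$ gives $q^{(0)}_z(t,\psi(t))=0$, and one further $\xi$-differentiation using $\cR\, Q^{(0)}_{\xi\xi}=-Q^{(0)}_\xi$ yields $q^{(0)}_{zz}(t,\psi(t))=0$; the regularity in Assumption~\ref{reg-assump} together with $\cR(t,1)=Q^{(0)}_\xi(t,1)=0$ delivers the decay as $z\to\infty$. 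Uniqueness for this linear Dirichlet problem then identifies $q^{(1)}$.

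Finally, to recover \eqref{eq:first order final} I need the coordinate translations $q^{(0)}_z = \cD_1 Q^{(0)}$ (immediate from $z_\xi=1/\cR$) and $q^{(0)}_{zz} = (\cD_3 - 2\cD_1)Q^{(0)}$. For the latter I would differentiate $q^{(0)}_z = \cR\, Q^{(0)}_\xi$ once more in $\xi$ to get $q^{(0)}_{zz} = \cR(\cR_\xi - 1)Q^{(0)}_\xi$, then differentiate the definition of $\cR$ to get $\cR^2\, Q^{(0)}_{\xi\xi\xi} = (\cR_\xi + 1)Q^{(0)}_\xi$, so that $\cD_3 Q^{(0)} = \cR(\cR_\xi + 1)Q^{(0)}_\xi$ and the claimed identity drops out. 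I expect the main obstacle to be precisely this algebraic cascade from $z$-derivatives to the $\cD_k$ operators: it chains two differentiations of the implicitly defined $\cR$ and is very easy to sign-flip, whereas once these identities are in hand the PDE match itself is, after the heat-equation cancellation, essentially bookkeeping.
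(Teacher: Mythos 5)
Your proof is correct, but it reaches \eqref{eq:first correction term} by a genuinely different route from the paper. The paper treats \eqref{eq:transformed_firstorder1} with the general operator machinery of \cite[Lemma 3.4]{lorig2015portfolio}: it introduces the commutators $\cL_X=[\cH,(x-\bar{x})I]$, $\cL_Y=[\cH,(y-\bar{y})I]$ and the operators $\cM_X(s),\cM_Y(s)$, proves that $\cM_X^k(s)\cM_Y^l(s)v$ stays in the kernel of $\cH$ whenever $\cH v=0$, and then writes the solution as a Duhamel integral $\sum\int_t^T(T-s)^n\cM_X^k(s)\cM_Y^l(s)v\,\dd s$ applied to $v=q^{(0)}_z$; this yields \eqref{eq:first correction term} after expanding $\cL_X,\cL_Y$. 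You instead posit the undetermined-coefficients ansatz $q^{(1)}=f\,q^{(0)}_z+g\,q^{(0)}_{zz}$, use the heat-equation cancellation of the $q^{(0)}_{zzz}$ and $q^{(0)}_{zzzz}$ terms, and solve the resulting transport equations $(\del_t+\cA_0)f=-\lambda_0\bigl[\lambda_{1,0}(x-\bar{x})+\lambda_{0,1}(y-\bar{y})\bigr]$ and $(\del_t+\cA_0)g+\lambda_0(\sigma_0 f_x+\rho\beta_0 f_y)=0$ explicitly; I checked these and your candidate $f=(T-t)\lambda_0 A$, $g=\tfrac12(T-t)^2\lambda_0 B$ does satisfy them, with the overdetermined constraint $\sigma_0 g_x+\rho\beta_0 g_y=0$ holding because $g$ is independent of $(x,y)$. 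Your approach is more elementary and self-contained for this first-order correction, but it relies on guessing the right functional form, whereas the paper's commutator lemma handles arbitrary polynomial sources $(T-t)^n(x-\bar{x})^k(y-\bar{y})^l v$ and therefore extends mechanically to higher-order terms in the expansion. Your verification of the boundary conditions (done in the $z$-variable via $q^{(0)}_z=\cR Q^{(0)}_\xi$ and $q^{(0)}_{zz}=\cR(\cR_\xi-1)Q^{(0)}_\xi$, rather than in the $\xi$-variable as in the paper) and your derivation of $q^{(0)}_{zz}=(\cD_3-2\cD_1)Q^{(0)}$ agree with Lemma \ref{Rcalcs}(ii); like the paper, you assert rather than prove uniqueness for the linear problem, so the two arguments are at the same level of rigor on that point.
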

The proof is given in Appendix \ref{Prop4proof}.

 \subsubsection{Summary of the first order value function approximation results}\label{summary}
The coefficient polynomial approximation to the value function $Q$, solution to the PDE problem \eqref{eq:hjbmain2}-\eqref{fullbcns}
is then defined by setting $a=1$: $Q\approx Q^{(0)}+Q^{(1)}$, where
\begin{itemize}
\item Zeroth order term: $Q^{(0)}(t,\xi)$ is estimated by numerically solving \eqref{eq:hjbzeroth1} with the boundary conditions \eqref{eq:hjbzeroth_bound3}.
\item First order term: $Q^{(1)}(t,\xi,x,y)$ is obtained from 
Proposition \ref{prop:solution_firstorder} and is given by \eqref{eq:first order final}.
\end{itemize}

\subsection{Optimal strategy approximation}\label{optport}
\label{sec:optimal strategy}
Once we have the estimates for $Q^{(0)}$ and $Q^{(1)}$ in the approximate expansion \eqref{eq:approx solution} of the value function $Q$, we can find the first order approximation of the optimal strategy $\pi^*$ from the formula in \eqref{eq:optportfolio}. In terms of $Q(t,\xi,x,y),$ the optimal strategy is given by 
\eqstar{
\pi^{*}(t,l,m,x,y) = -m \left[ \frac{(\mu(x)-r) Q_\xi}{(\sigma(x,y))^2 Q_{\xi \xi}} + \frac{\rho \beta(y) Q_{y \xi}}{\sigma(x,y) Q_{\xi \xi}} + \frac{Q_{x \xi }}{Q_{\xi \xi}} \right], \text{ with } \xi = \frac{l}{m}.
}
To express the approximation for $\pi^*$ in terms of $\cR, Q^{(0)}$ and their spatial derivatives, we first replace $Q$ by $Q^{(0)}+Q^{(1)}$ in the above formula, use the results in \eqref{eq:first order final} and the following Lemma. 
\begin{lemma}
\label{Rcalcs}
From the definition \eqref{eq:define risk tolerance} of $\cR$, we have the following identities:
\eqstar{
&\text{(i) } (\cD_1 + \cD_2 )\cD_1 Q^{(0)} = \cR\cR_{\xi\xi} \cD_1Q^{(0)},\\
&\text{(ii) } (-2\cD_1 + \cD_3) Q^{(0)} = \cD_1 \cD_1 Q^{(0)},\\
&\text{(iii) } \bigl(\cD_1   +\cD_2 \bigr)  \cD_1 \cD_1 Q^{(0)} = \cR\bigl(\cR_{\xi \xi} (3 \cR_\xi -2) + \cR \cR_{\xi \xi \xi} \bigr) \cD_1 Q^{(0)}.
}
\end{lemma}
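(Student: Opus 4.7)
The plan is to prove all three identities by direct computation, using the fundamental algebraic relation $\cR Q^{(0)}_{\xi\xi} = -Q^{(0)}_\xi$ (which is just the definition of $\cR$ rearranged) to repeatedly collapse second- and third-order derivatives of $Q^{(0)}$ back to $Q^{(0)}_\xi$. First I would establish two auxiliary formulas: differentiating $\cR Q^{(0)}_\xi$ once gives $(\cD_1 Q^{(0)})_\xi = (\cR_\xi - 1) Q^{(0)}_\xi$, and differentiating the defining relation once yields $\cR Q^{(0)}_{\xi\xi\xi} = -(\cR_\xi + 1) Q^{(0)}_{\xi\xi}$, which expresses the only third derivative of $Q^{(0)}$ we will ever need in terms of $Q^{(0)}_\xi$ and derivatives of $\cR$.

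For identity (i), I would compute $\cD_1 \cD_1 Q^{(0)}$ and $\cD_2 \cD_1 Q^{(0)}$ separately using the auxiliary formula for $(\cD_1 Q^{(0)})_\xi$. Each produces a term proportional to $(\cR_\xi - 1) \cD_1 Q^{(0)}$; these cancel in the sum, leaving $\cR \cR_{\xi\xi} \cD_1 Q^{(0)}$. For identity (ii), I would expand $\cD_3 Q^{(0)} = \cR^3 Q^{(0)}_{\xi\xi\xi}$ using the auxiliary formula for $\cR Q^{(0)}_{\xi\xi\xi}$ to get $\cR(\cR_\xi + 1) Q^{(0)}_\xi$; subtracting $2\cD_1 Q^{(0)} = 2\cR Q^{(0)}_\xi$ produces $\cR(\cR_\xi - 1) Q^{(0)}_\xi$, and this matches the expression for $\cD_1 \cD_1 Q^{(0)}$ that already appeared in the proof of (i).

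For identity (iii), which I expect to be the main bookkeeping obstacle, I would set $g := \cD_1 \cD_1 Q^{(0)} = \cR(\cR_\xi - 1) Q^{(0)}_\xi$ and differentiate twice in $\xi$ using the product rule, each time substituting $\cR Q^{(0)}_{\xi\xi} = -Q^{(0)}_\xi$ wherever a second derivative appears. This gives $g_\xi = \bigl((\cR_\xi-1)^2 + \cR \cR_{\xi\xi}\bigr) Q^{(0)}_\xi$; call this coefficient $h$. A second differentiation yields $g_{\xi\xi} = h_\xi Q^{(0)}_\xi + h Q^{(0)}_{\xi\xi}$ with $h_\xi = (3\cR_\xi - 2) \cR_{\xi\xi} + \cR \cR_{\xi\xi\xi}$. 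When forming $\cD_2 g = \cR^2 g_{\xi\xi}$, the $h Q^{(0)}_{\xi\xi}$ piece collapses to $-h Q^{(0)}_\xi /\cR$, contributing $-h \cD_1 Q^{(0)}$, which exactly cancels the $\cD_1 g = h \cD_1 Q^{(0)}$ contribution. What remains is $\cR \bigl((3\cR_\xi - 2)\cR_{\xi\xi} + \cR \cR_{\xi\xi\xi}\bigr) \cD_1 Q^{(0)}$, as claimed.

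Conceptually there is no obstacle beyond the defining relation $\cR Q^{(0)}_{\xi\xi} + Q^{(0)}_\xi = 0$ and its first derivative; the only real care needed is in collecting the $\cR_\xi^2$, $\cR \cR_{\xi\xi}$ and $\cR_\xi \cR_{\xi\xi}$ terms in identity (iii) and noticing the clean cancellation of the coefficient $h$ between $\cD_1 g$ and the second-derivative contribution in $\cD_2 g$. The three identities are listed in exactly the order they will be used later: (i) governs the application of $\cD_1 + \cD_2$ to $\cD_1 Q^{(0)}$ (the leading piece of $Q^{(1)}$), (ii) collapses the $\cD_3 - 2\cD_1$ combination appearing in \eqref{eq:first order final}, and (iii) is the corresponding identity needed for $\cD_1 \cD_1 Q^{(0)}$, so it is natural to verify them together by a single pattern of calculation.
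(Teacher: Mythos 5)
Your proposal is correct and follows essentially the same route as the paper: direct computation from the defining relation $\cR Q^{(0)}_{\xi\xi}=-Q^{(0)}_\xi$ and its first derivative, with the same intermediate identities ($\cD_1\cD_1 Q^{(0)}=(\cR_\xi-1)\cD_1 Q^{(0)}$, the coefficient $(\cR_\xi-1)^2+\cR\cR_{\xi\xi}$ after one differentiation, and the cancellation of that coefficient between the $\cD_1$ and $\cD_2$ contributions in (iii)).
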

\begin{proof}
We show the following using elementary manipulations. From \eqref{eq:define risk tolerance} and \eqref{eq:def diff operator}, recall that
\eqstar{
\cR = -\frac{Q^{(0)}_\xi}{Q^{(0)}_{\xi\xi}}, \quad
\cD_k = \cR^k \frac{\del^k}{\del \xi^k}, \quad k = 1,2,\ldots.
}

\noindent (i) We have,
\eqstar{
\cD_1 \cD_1 Q^{(0)} &= \cD_1 (\cR Q^{(0)}_{\xi}) = \cR \cR_{\xi} Q^{(0)}_{\xi} + \cR^2 Q^{(0)}_{\xi \xi} =  (\cR_\xi -1)\cD_1Q^{(0)}, \text{ and}\\
\cD_2 \cD_1 Q^{(0)} &= \cR\cR_{\xi\xi} \cD_1Q^{(0)} - (\cR_\xi -1) \cD_1Q^{(0)}.
}
The above result and the distributive property of $\cD_k$ operator completes the proof. 

\noindent (ii) We have,
\eqstar{
\cD_3 Q^{(0)} &= \cR^3 \del_\xi \Bigl(-\frac{Q^{(0)}_\xi}{\cR}\Bigr) = \cR^3\Bigl(-\frac{Q^{(0)}_{\xi\xi}}{\cR} + \frac{Q^{(0)}_\xi \cR_\xi}{\cR^2}\Bigr) = (\cR_{\xi} + 1)\cD_1 Q^{(0)}.
}
This gives,
\eqstar{
-2\cD_1Q^{(0)} + \cD_3 Q^{(0)} &= -2\cD_1Q^{(0)} + (\cR_{\xi} + 1)\cD_1 Q^{(0)} = (\cR_\xi - 1) \cD_1 Q^{(0)}.
}
The final conclusion follows from (i).

\noindent (iii) Using the previous calculations, we get
\eqstar{
\cD_1\bigl( (\cR_\xi - 1) \cD_1 Q^{(0)} \bigr) &= \cR^2 \cR_{\xi \xi}Q^{(0)}_\xi + (\cR_\xi - 1) \cD_1 \cD_1 Q^{(0)} =  \cR \cR_{\xi \xi} \cD_1 Q^{(0)}  + (\cR_\xi - 1)^2 \cD_1 Q^{(0)},\\
\cD_2 \bigl( (\cR_\xi - 1) \cD_1 Q^{(0)} \bigr) &= \cR \cD_1 \bigl( \cR \cR_{\xi \xi} Q^{(0)}_\xi + (\cR_\xi - 1)^2 Q^{(0)}_\xi \bigr)\\
&= \cR\bigl( \cR \cR_{\xi \xi \xi} + \cR_{\xi \xi} (\cR_\xi - 1)\bigr) \cD_1Q^{(0)} + \cR \cD_1 \bigl((\cR_\xi - 1)^2 Q^{(0)}_\xi \bigr)\\
&= \cR\bigl( \cR\cR_{\xi \xi \xi}  + 3 \cR_{\xi \xi} (\cR_\xi - 1)\bigr) \cD_1Q^{(0)} -  (\cR_\xi - 1)^2 \cD_1Q^{(0)}.
}
The sum of above two results concludes the proof. 
\end{proof}

Thus, we obtain the optimal strategy approximation as 
\eqlnostar{eq:optimal strategy final}{
\pi^* &\approx m \left[\frac{(\mu(x,y)-r)}{(\sigma(x,y))^2} \cR + (T-t)\lambda_0 A(t,x,y) \frac{(\mu(x,y)-r)}{(\sigma(x,y))^2}  \cR^2\cR_{\xi\xi} \right. \nonumber\\
&+ \left.   \frac{1}{2}(T-t)^2 \lambda_0 B \frac{(\mu(x,y)-r)}{(\sigma(x,y))^2}  \cR^2\bigl(  \cR_{\xi \xi} (3 \cR_\xi -2) + \cR \cR_{\xi \xi \xi} \bigr) \right.\\
&+\left.  (T-t) \lambda_0 \Bigl(\frac{\lambda_{0,1}\rho \beta(y)}{\sigma(x,y)}+ \lambda_{1,0} \Bigr) \cR (\cR_\xi - 1) \right].
}

\section{Examples and Numerical Implementation}
\label{sec:numerics}
In this section, we consider the stochastic volatility model as in  \citet{chacko2005dynamic} with their calibrated set of parameters and provide a detailed discussion of the numerical implementation of our results obtained in Section \ref{sec:main results}. We discuss the effect of stochastic volatility on value function and optimal strategy for the case of \emph{power} utility function and a \emph{mixture} of two power utility functions, as introduced in \cite{fouque2015portfolio}. The latter allows for relative aversion that declines with wealth, while for the former it is constant across wealth levels.

Under the considered stochastic volatility model \cite[Section 1]{chacko2005dynamic}, the coefficients $(\mu, \sigma, c, \beta)$ in Section \ref{sec:problem formulation} are independent of $x$ and are given as 
\eqstar{
\mu(y) = \mu, \quad \sigma(y) = \frac{1}{\sqrt{y}}, \quad c(y) = \kappa (\theta - y), \quad \beta(y) = \delta \sqrt{y}.
}
The market calibrated values of the constants are 

\begin{table}[H]
\centering
\small
\begin{tabular}{c c c c c c}
\toprule
$\mu-r$ & $\kappa$ & $\theta$ & $\delta$ & $\rho$\\
\toprule
0.0811 & 0.3374 & 27.9345 & 0.6503 & 0.5241\\
\hline
\end{tabular}
\label{tab:parval}
\end{table}
We numerically solve for $Q^{(0)}$ backward in time via explicit finite-difference Euler scheme. We approximate the domain $[0,T] \times [\alpha,1]$ with a uniform mesh given as 
\eqstar{
\cM = \bigl\{(t^n,\xi_j): n = 0,1,\ldots, N, \quad j = 0,1,\ldots,J\bigr\},
}
where $t^n = T - n \Delta t, \xi_j = \alpha + j \Delta \xi.$ Let $Q^n_j$ denote the numerical approximation of $Q^{(0)}(t^n,\xi_j)$.
Then the discretized equation for $Q^{(0)}$ in the interior is written as
\eqlnostar{eq:fin diff scheme}{
Q^{n+1}_j = Q^n_j - \frac{1}{8 }\lambda^2_0 \Delta t\frac{(Q^n_{j+1} - Q^n_{j-1})^2}{(Q^n_{j+1} - 2 Q^n_j + Q^n_{j-1})}.
}
We start with the guess $Q^0_j = U(\xi_j), \text{ for all } j = 0, 1, \ldots, J$, and the boundary conditions are
\eqlnostar{eq:fin diff bound}{
Q^{n+1}_J = Q^{n+1}_{J-1}, \text{ and, } 
Q^{n+1}_0 = U(\xi_0).
}
\begin{figure}[htbp]
\centering
\subfigure[]{\label{fig:chacko_Q0}\includegraphics[trim = 130 280 130 280, width=0.48\textwidth]{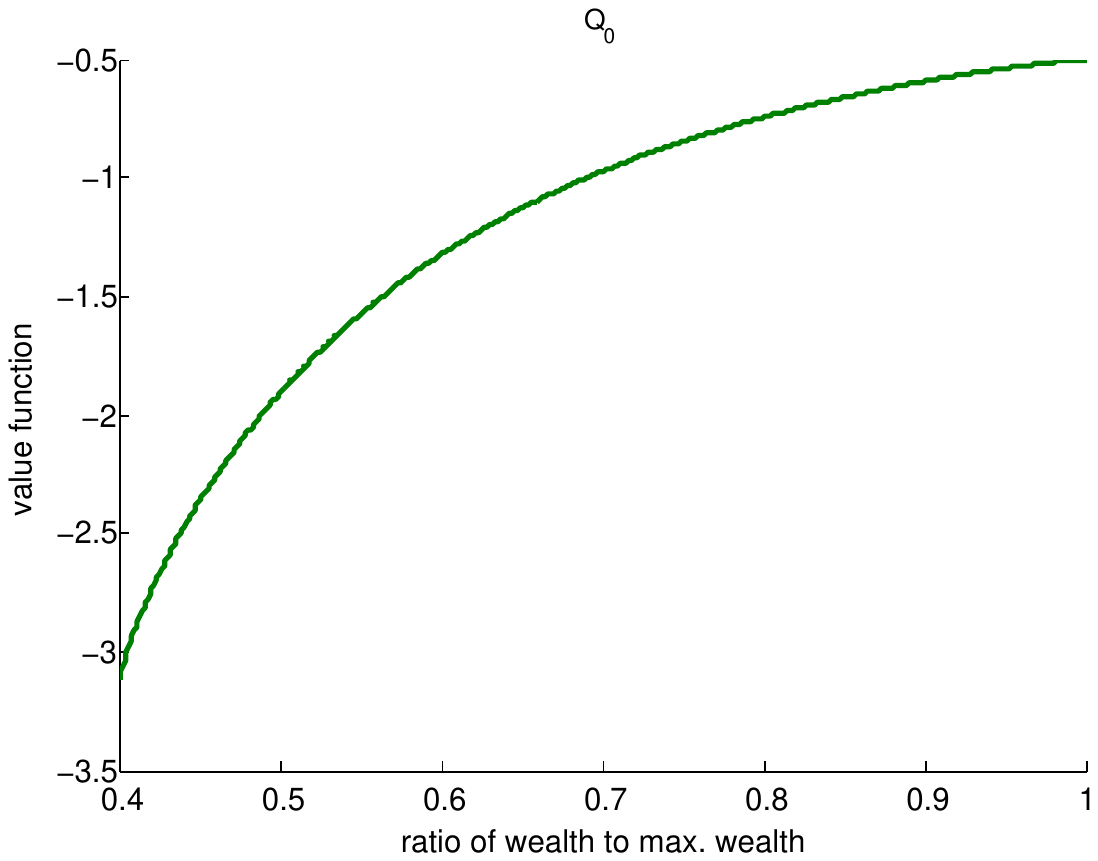}}
~
\subfigure[]{\label{fig:chacko_Qloss}\includegraphics[trim = 130 280 130 280, width=0.48\textwidth]{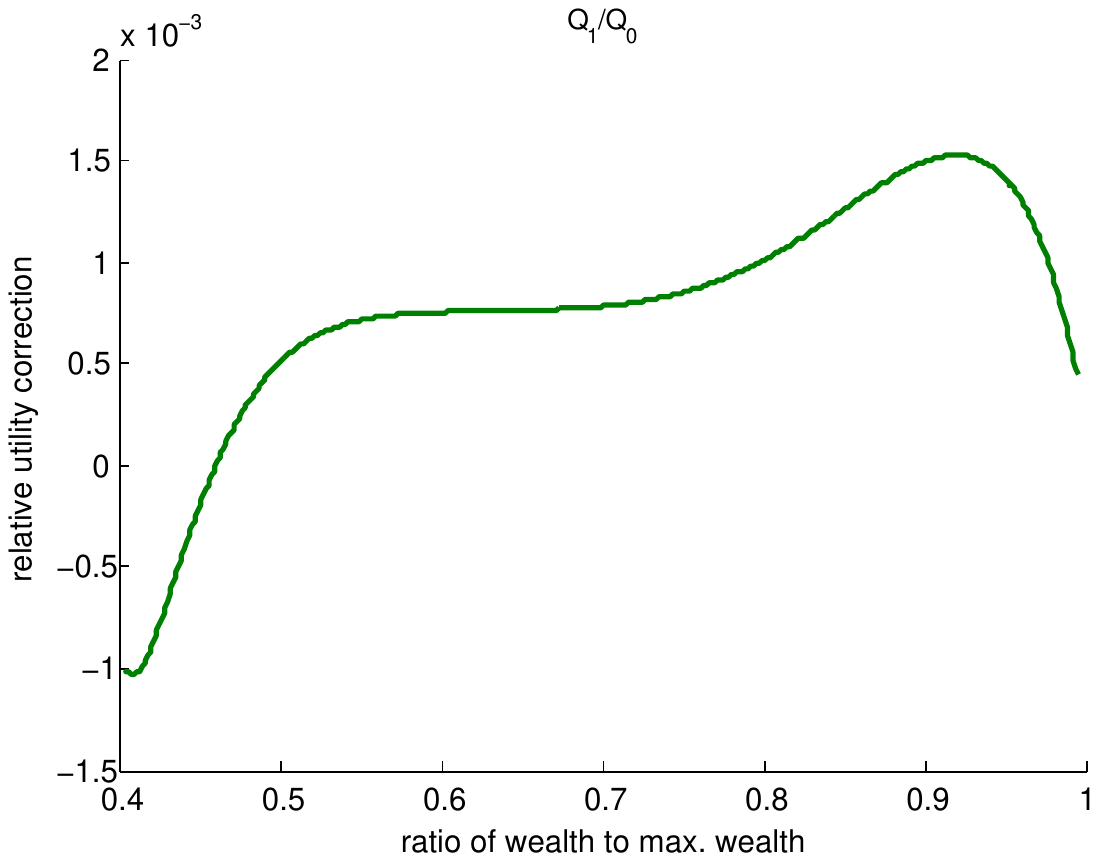}}
\caption{{\small{{\em Numerical solutions to (a) zeroth order value function $Q^{(0)}$ (b) relative utility correction $Q^{(1)}/Q^{(0)}.$ Utility function used $U(\xi) = \frac{\xi^{1-\gamma}}{1-\gamma}, \gamma = 3.0$}}}}
\end{figure}
In Figure \ref{fig:chacko_Q0} and \ref{fig:chacko_mix_Q0}, we plot the numerical solution for the leading order expansion term $Q^{(0)}$ obtained from \eqref{eq:fin diff scheme}.  We can see that the zeroth order term is concave and non-decreasing as expected from Lemma \ref{lem:zeroth order properties}.  
\begin{figure}[htbp]
\centering
\subfigure[]{\label{fig:chacko_mix_Q0}\includegraphics[trim = 130 280 130 280, width=0.48\textwidth]{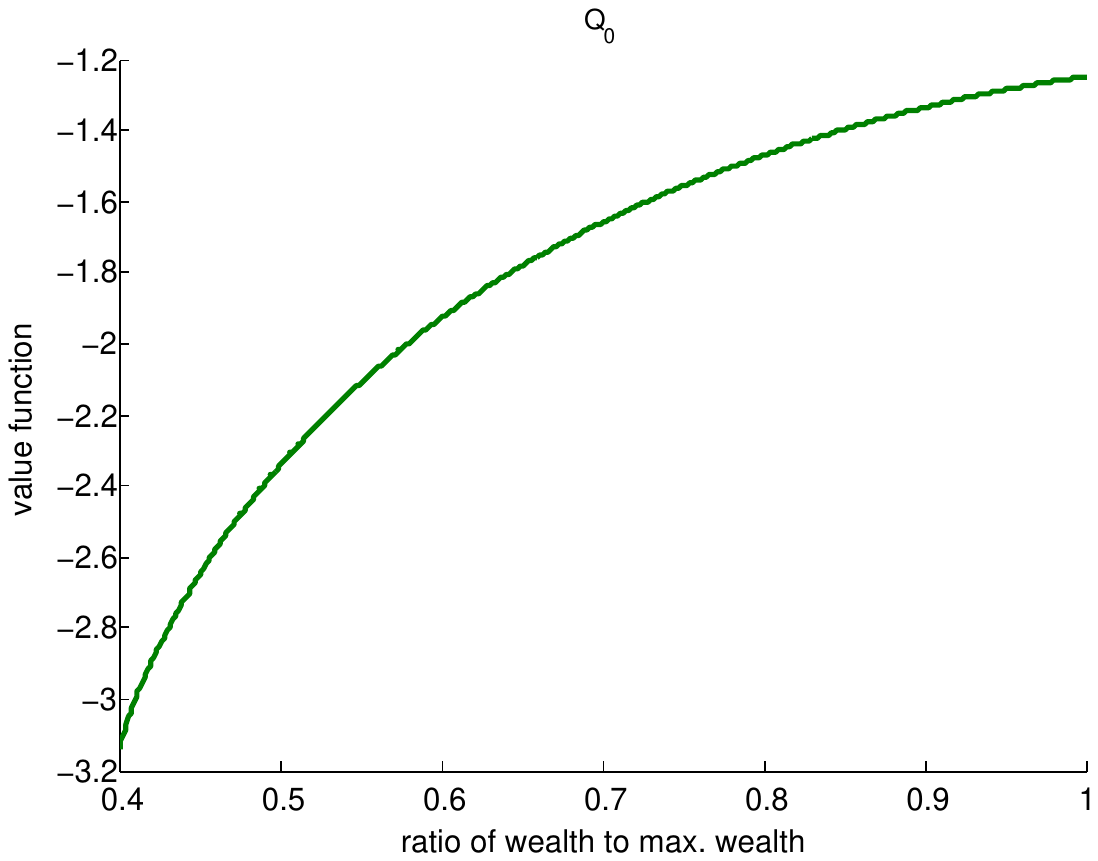}}
~
\subfigure[]{\label{fig:chacko_mix_Qloss}\includegraphics[trim = 130 280 130 280, width=0.48\textwidth]{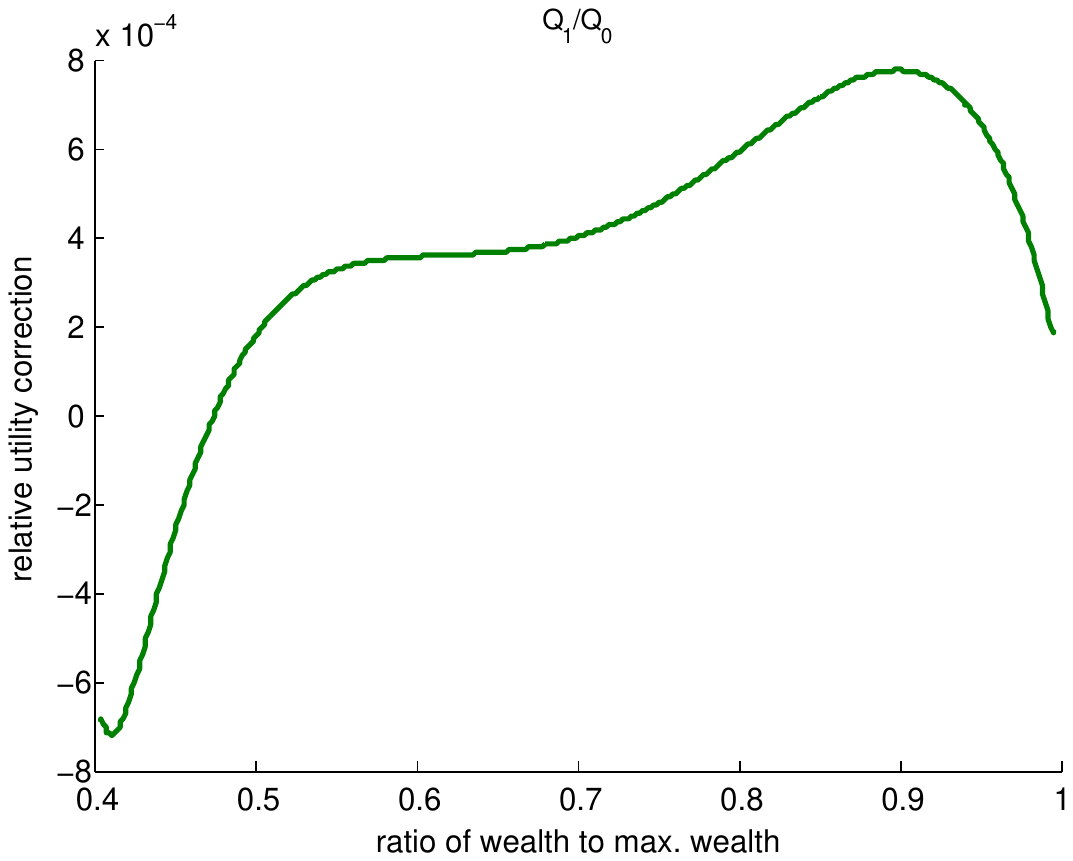}}
\caption{{\small{{\em Numerical solutions to (a) zeroth order value function $Q^{(0)}$ (b) relative utility correction $Q^{(1)}/Q^{(0)}.$ Utility function used $U(\xi) = \frac{\xi^{1-\gamma_1}}{1-\gamma_1} + \frac{\xi^{1-\gamma_2}}{1-\gamma_2}, \gamma_1 = 3.0, \gamma_2 = 1.5.$}}}}
\end{figure}

To find the first order correction term, we refer to formula \eqref{eq:first order final}. We can directly use $\cR$ from Proposition \ref{thm:risk tolerance} in the formula instead of taking derivatives of $Q^{(0)}.$ We note that to obtain $Q^{(1)}$, we need to set the value for reference level $\bar{y}.$ We set $\bar{y} = y,$ the current value of the stochastic volatility factor. This gives us
\eqlnostar{eq:first order chacko}{
Q^{(1)} &= \Bigl(\frac{1}{2} \lambda^2 \Bigr)_{0,1} (T-t)^2 c_0 \cR Q^{(0)}_{\xi} +  \Bigl(\frac{1}{2} \lambda^2 \Bigr)_{0,1} (T-t)^2  \lambda_0 \rho \beta_0 \Bigl( -2  \cR Q^{(0)}_{\xi} + \cR^3 \partial^3_\xi Q^{(0)} \Bigr)\\
&= \frac{1}{2}(\mu-r)^2 (T-t)^2\Bigl[ \kappa (\theta - y) \cR Q^{(0)}_{\xi} + \rho \delta (\mu-r)y\Bigl( -2  \cR Q^{(0)}_{\xi} + \cR^3 \partial^3_\xi Q^{(0)} \Bigr) \Bigr].
}
We use the regularity properties of $\cR$ and $Q^{(0)}$ to compute the above expression. We obtain estimates of $\cR$ by numerically solving \eqref{eq:risk tol pde} with boundary conditions \eqref{eq:R_bound1} via explicit finite-difference Euler scheme. The discretized equation in the interior is written as 
\eqlnostar{eq:fin diff R}{
R^{n+1}_j = R^n_j + \frac{1}{2}\lambda^2_0 \Delta t (R^n_j)^2 \frac{(R^n_{j+1} - 2 R^n_j + R^n_{j-1})}{(\Delta \xi)^2},
}
and the boundary conditions
$R^{n+1}_J = 0$, and  
$R^{n+1}_0 = 0$.
As we solve the scheme backward in time, we start with the guess $R^0_j = -\frac{U^\prime(\xi_j)}{U^{\prime\prime}(\xi_j)}, \text{ for all } j = 0, 1,\ldots, J.$ In our market calibrated stochastic volatility model, we set $y = \theta$ and plot the relative utility correction in Figure \ref{fig:chacko_Qloss} and Figure \ref{fig:chacko_mix_Qloss}. We observe that the change in the value function due to the introduction of stochastic volatility is negligible.

Next, we calculate the approximation to optimal strategy whose different terms are given from \eqref{eq:optimal strategy final} as 
\eqstar{
\frac{\pi^*_0}{m} &=  (\mu-r) y \cR,\\
\frac{\pi^*_1}{m} &=   \frac{(\mu-r)^3y^2}{2}(T-t)^2 \Bigl[\kappa (\theta - y) \Bigl( \cR^2 Q^{(0)}_{\xi \xi}\Bigr)\\
&+ \rho \delta \Bigl( \cR^2 \cR_{\xi \xi} (3 \cR_\xi -2) + \cR^3 \cR_{\xi \xi \xi} \Bigr)\Bigr]+ (\mu-r)^2(T-t) \rho \delta y\cR \Bigl(\cR_\xi - 1 \Bigr).
}
\begin{figure}[htbp]
\centering
\subfigure[]{\label{fig:chacko_piapprox}\includegraphics[trim = 130 280 130 280, width=0.48\textwidth]{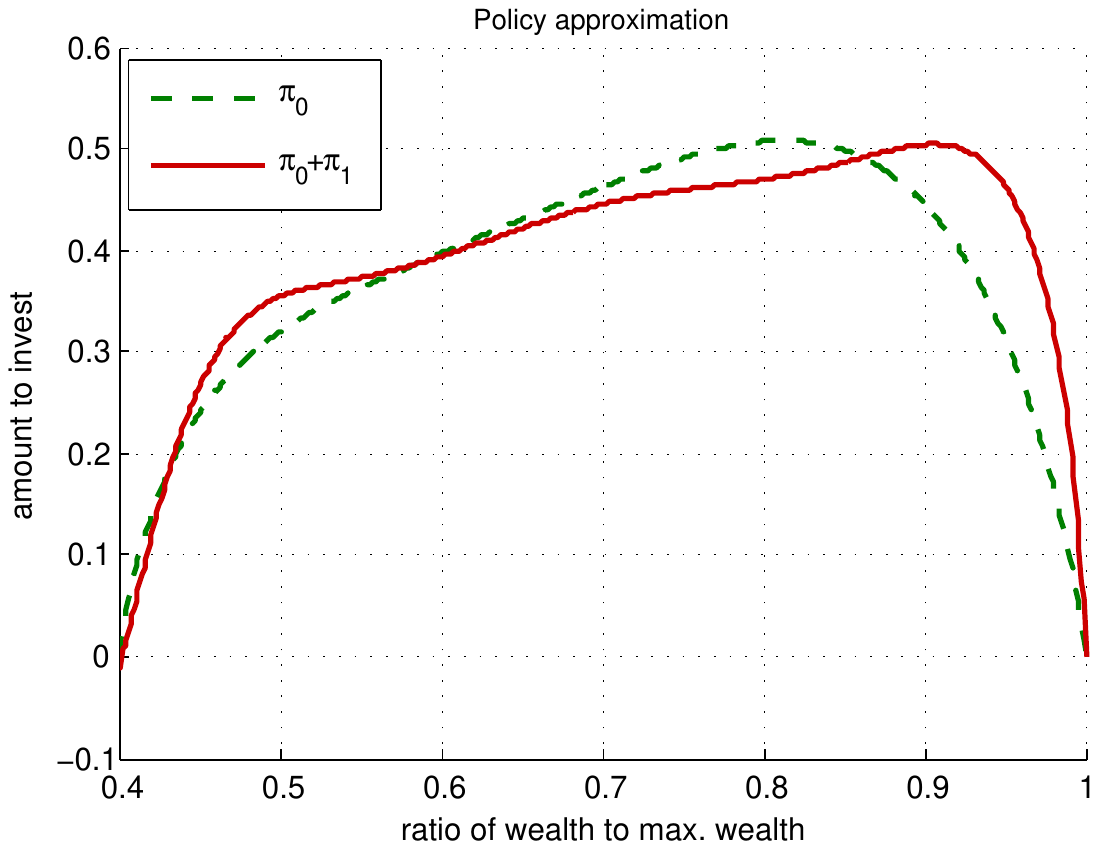}}
~
\subfigure[]{\label{fig:chacko_mix_piapprox}\includegraphics[trim = 130 280 130 280, width=0.48\textwidth]{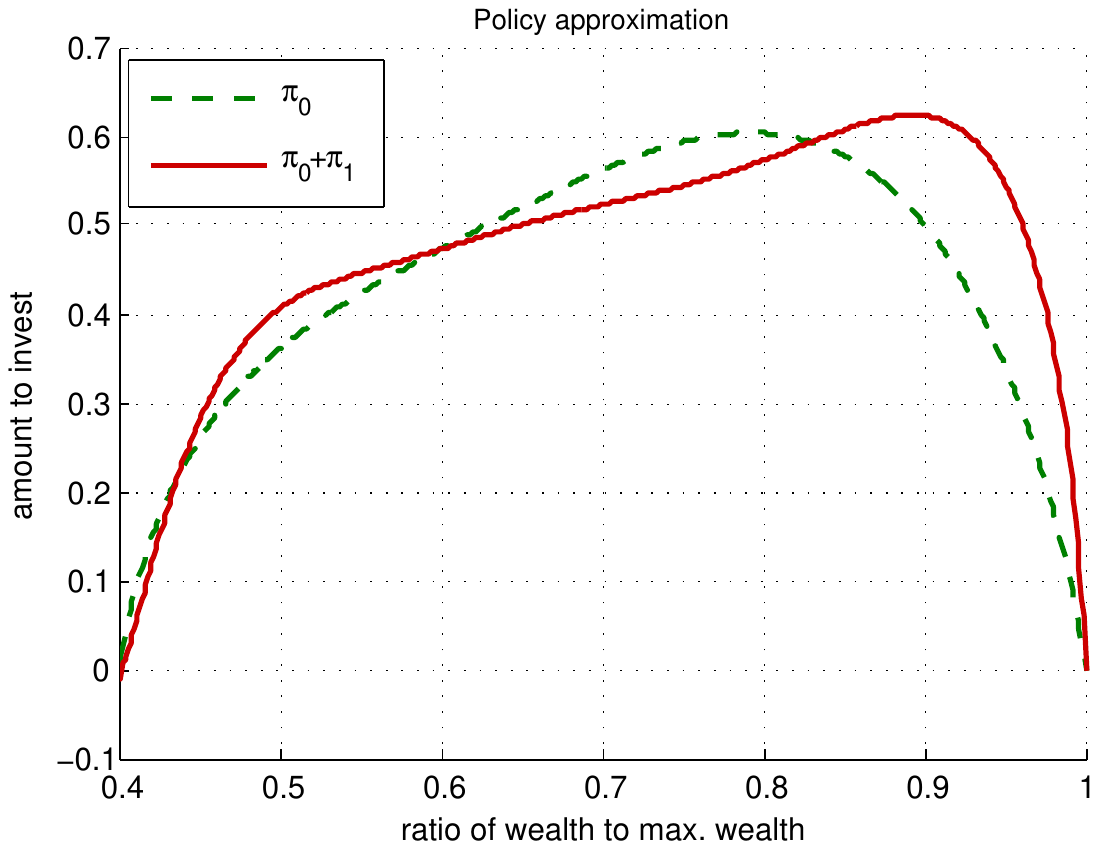}}
\caption{{\small{{\em Numerical solutions to the optimal strategy approximation for utility function (a) $U(\xi) = \frac{\xi^{1-\gamma}}{1-\gamma}, \gamma = 3.0$ (b) $U(\xi) = \frac{\xi^{1-\gamma_1}}{1-\gamma_1} + \frac{\xi^{1-\gamma_2}}{1-\gamma_2}, \gamma_1 = 3.0, \gamma_2 = 1.5.$}}}}
\end{figure}
We suppose that the initial value of maximum wealth is unity, i.e.\ we set $m=1.0$ and plot numerical solution to the leading order term $\pi_0$ and to the first order approximation $\pi_0 + \pi_1$ in Figure \ref{fig:chacko_piapprox} and \ref{fig:chacko_mix_piapprox}. It is interesting to note that to achieve similar value functions without and with the stochastic volatility correction, i.e.\ $Q^{(0)}$ and $Q^{(0)} + Q^{(1)},$ we clearly need to employ two very different investment policies, namely $\pi_0$ and $\pi_0 + \pi_1.$

In Figure \ref{fig:chacko_piapprox} and \ref{fig:chacko_mix_piapprox}, we note that as the current wealth approaches to the maximum wealth value, the optimal strategy is to gradually liquidate the position in the risky asset. In the presence of stochastic volatility, the optimal strategy approximation $\pi_0 + \pi_1$ suggests to hold the risky position longer than without the stochastic volatility correction as in $\pi_0.$ The corrected strategy also suggests to sharply liquidate the position in the risky asset to safeguard from the downside risk of stochastic volatility. On the other hand, when the current wealth moves away from the drawdown barrier, the optimal strategy approximation $\pi_0 + \pi_1$ suggests to build up a position in the risky asset at about the same trading rate to that in the case of constant volatility approximation $\pi_0.$ 

From the above results, we deduce that even in the presence of stochastic volatility, the investor does not lose much value in his portfolio. However, to achieve similar value functions, the investor has to deploy a remarkably different strategy corrected for stochastic volatility $\pi_0 + \pi_1$ when compared to the constant volatility strategy $\pi_0.$ The larger position in the risky asset when moving away from the drawdown barrier suggests leveraging the possible upside due to stochastic volatility while holding on to the risky asset longer than in the constant volatility case when close to the optimal level suggests caution towards a possible downside risk. 

In the above results, we have set the level of stochastic volatility factor $y$ to be the same as the long term value $\theta.$ As it is clear that the level of stochastic volatility plays a crucial role in the correction terms, we studied the effects when $y$ moves in either direction away from its long term value $\theta.$ We observed that even in these new cases, the relative utility correction remains small. However, the optimal strategy in these cases exhibit remarkably different behaviors. When the current level of volatility is higher than the long-term average $y = 1.05\times \theta$, in Figure \ref{fig:chacko_high_piapprox} the optimal strategy approximation suggests to invest more in the risky asset compared to the strategy without stochastic volatility correction. Also, as the portfolio wealth moves away the drawdown barrier, the corrected optimal strategy suggests to build up the position in risky asset at a much higher rate than suggested by $\pi_0.$ Whereas, in the case when the current level of volatility is lower than the long-term average $y = 0.95\times \theta$, in Figure \ref{fig:chacko_low_piapprox} the optimal strategy approximation suggests to invest less in the risky asset compared to the strategy without stochastic volatility correction. Still close to the maximum wealth value, the corrected strategy suggests to hold more risky asset than the constant volatility strategy suggests.
\begin{figure}[htbp]
\centering
\subfigure[]{\label{fig:chacko_high_piapprox}\includegraphics[trim = 130 280 130 280, width=0.48\textwidth]{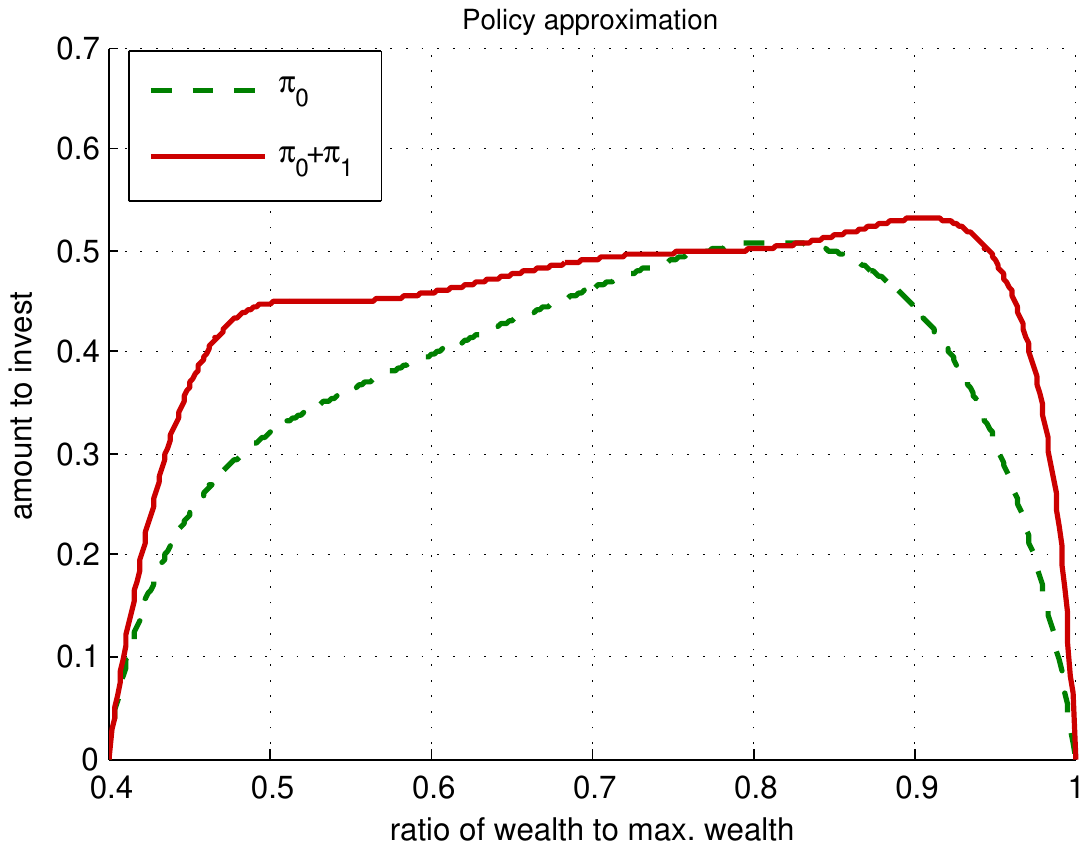}}
~
\subfigure[]{\label{fig:chacko_low_piapprox}\includegraphics[trim = 130 280 130 280, width=0.48\textwidth]{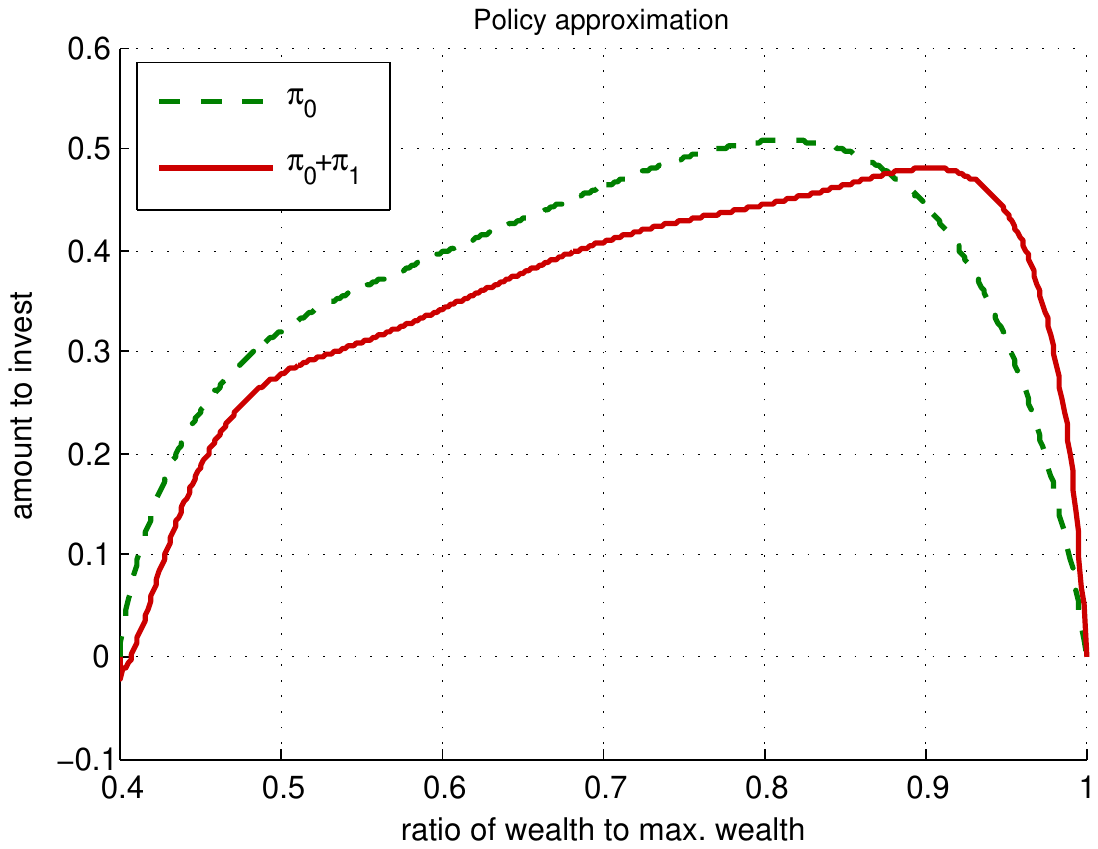}}
\caption{{\small{{\em Numerical solutions to the optimal strategy approximation for (a) $y = 1.05 \times \theta$ (b) $y = 0.95 \times \theta.$ Utility function used $U(\xi) = \frac{\xi^{1-\gamma}}{1-\gamma}, \gamma = 3.0$}}}}
\end{figure}

\section{Conclusion}
\label{sec:conclusion}
We studied the impact of stochastic Sharpe ratio in a dynamic portfolio optimization problem under a drawdown constraint. We proposed a new investor objective framework which allows for a dimensionality reducing transformation. This new setting allowed us to employ coefficient expansion technique to solve for different terms in the approximation of the value function and optimal strategy. With the help of a nonlinear transformation we derived value function expansion terms which can be numerically calculated and used to approximate the optimal portfolio strategy. In a popular stochastic volatility model with market calibrated parameters, we illustrated the remarkable differences between optimal strategies with and without stochastic volatility correction. 

The current problem requires further investigation which can be performed along the following directions: 
\begin{enumerate}
\item Approximation error analysis: In this work, we focussed our attention to capture the first order effects of stochastic volatility on value function and optimal portfolio strategy. We observed that the stochastic volatility correction to value function is small whereas the corrected optimal strategy exhibited remarkably different behavior than the constant volatility optimal strategy. This calls for an investigation of the higher order terms to look for possible other interesting effects on the optimal strategy. 
\item Multi-asset market model: We studied the portfolio optimization problem under drawdown constraint in a stochastic volatility model which provides a sensible guide towards informed investment decisions. However, in order to completely capture the market conditions, we plan to tackle the same problem in a multi-asset model setting and study the effect of stochastic volatility on investment strategies. 
\end{enumerate}

\appendix

\section{Proofs}
\label{sec:proofs}
\subsection{Proof of Lemma \ref{lem:zeroth order properties}}\label{Lemma1proof}
\begin{proof}
Let us consider a market with a risky asset whose dynamics is given by a lognormal model
$$ \frac{\dd S_t}{S_t} = \mu_0 \dd t  + \sigma_0 \dd B^{(1)}_t.$$
With this risky asset in the market, we once again formulate our investor's portfolio optimization problem (see Section \ref{sec:discounted formulation})
\eqstar{
\cV(t,l,m) &= \sup_{\pi \in \Pi}\E\Bigl[ U\Bigl(\frac{L_T}{M_T}\Bigr) \Big \vert L_t = l, M_t = m\Bigr], \quad t > 0, m > l > \alpha m > 0,
}
where the admissible strategies are given by
\eqstar{
\Pi_{\alpha,t,l,m} := \bigl\{ \pi: \text{measurable}, \, \F-\text{adapted}, \, &\E_{t,l,m}\int^T_t \pi^2_s \dd s < \infty \text{ s.t. } L_s \geq \alpha M_s > 0\text{ a.s.}, t \leq s \leq T \bigr\},
} 
and $\mathbb{F} = \{\cF_t: 0 \leq t \leq T \}$ is  the augmentation of the filtration generated by $B^{(1)}.$ We define the constant Sharpe ratio as $\lambda_0 := \frac{(\mu_0 - r )}{\sigma_0}$ and the space domain as $\crO_{\alpha} :=\{(l,m): m > l >\alpha m > 0\} \subset \R^2.$ Then, by proceeding as in Section \ref{sec:discounted formulation}, it can be shown that for $\cV \in C^{1,2,1}(\R_+ \times \crO_{\alpha}),$ we have the following nonlinear PDE
\eqstar{
\del_t \cV - \frac{1}{2}\lambda^2_0\frac{(\cV_l)^2}{\cV_{ll}} = 0, \text{ on } [0,T) \times \crO_{\alpha},
}
and the boundary conditions are
\eqstar{
\cV(T,l,m) = U(l/m),\quad
\cV_m(t,m,m) = 0,\quad
\cV(t,\alpha m,m) = U(\alpha).
}
Similar to Section \ref{sec:dimension reduction}, we perform a change of variable $\xi := l/m.$ It is then clear that the leading order term in expansion \eqref{eq:approx solution}, $Q^{(0)}(t,\xi) = \cV(t,l,m).$ 

To first show that $Q^{(0)}(t,\cdot)$ is a non-decreasing function, we recall that in the constant volatility model, for a portfolio strategy $\pi,$ the discounted wealth process is given as 
\eqstar{
L^{l,\pi}_t = l + \int^t_0 \pi_s \sigma_0 \bigl( \lambda_0 \dd s +  \dd B^{(1)}_s \bigr),
}
where $l$ is the starting wealth value. Let $(L^{l,\pi})^*$ denote the maximum of wealth process $L^{l,\pi}$ over the time period $[0,T].$ Now, we consider $l,l^\prime$ for a fixed value of $m$ such that $(t,l,m), (t,l^\prime,m) \in [0,T) \times \crO_\alpha.$ Then, for $l \leq l^\prime,$ we choose $\pi \in \Pi_{\alpha,t,l,m}$ such that we have 
\eqlnostar{eq:increasing1}{
L^{l,\pi} &\geq \alpha (m \vee (L^{l,\pi})^*)\\
&= \alpha \Biggl(m \vee \Bigl(l + \bigl(\int^t_0 \pi_s \sigma_0 ( \lambda_0 \dd s +  \dd B^{(1)}_s )\bigr)^* \Bigr) \Biggr).
}
Add $(l^\prime - l)$ to both sides of the inequality above to write
\eqstar{
L^{l^\prime,\pi} &\geq  \Biggl(\bigl(\alpha m + (l^\prime - l) \bigr) \vee \Bigl(\alpha l^\prime + \alpha \bigl(\int^t_0 \pi_s \sigma_0 ( \lambda_0 \dd s +  \dd B^{(1)}_s )\bigr)^*  + (1-\alpha)(l^\prime -l )\Bigr) \Biggr)\\
&\geq \alpha\Biggl( m \vee \Bigl( l^\prime +  \bigl(\int^t_0 \pi_s \sigma_0 ( \lambda_0 \dd s +  \dd B^{(1)}_s )\bigr)^*\Bigr) \Biggr)\\
&= \alpha (m \vee (L^{l^\prime,\pi})^*),
}
which gives that $\Pi_{\alpha,t,l,m} \subset \Pi_{\alpha,t,l^\prime,m}.$ Thus, we get $\cV(t,l,m) \leq \cV(t,l^\prime,m).$ For $ \xi:= \frac{l}{m}$ and $\xi^\prime :=\frac{l^\prime}{m},$ this gives us 
$$Q^{(0)}(t,\xi) \leq Q^{(0)}(t,\xi^\prime).$$
Next, it follows from the arguments presented in Lemma 3.2 \citet{elie2008finite} that $\cV(t,l,m)$ is non-increasing in variable $m.$ Thus, for fixed $l$ and $m \leq m^\prime$ such that $(t,l,m), (t,l,m^\prime) \in [0,T) \times \crO_\alpha,$  we have $\cV(t,l,m^\prime) \leq \cV(t,l,m).$ Once again by defining $\xi^\prime := \frac{l}{m^\prime}$ and $\xi := \frac{l}{m},$ we get 
\eqstar{
\cV(t,l,m^\prime) \leq \cV(t,l,m) \implies Q^{(0)}(t,\xi^\prime) \leq Q^{(0)}(t,\xi).
}
Therefore, we have shown that $Q^{(0)}(t,\cdot)$ is non-decreasing.

In order to show concavity of value function $Q^{(0)}(t,\cdot),$ we take motivation from the arguments presented in Lemma 3.2 \citet{elie2008finite}. First, we fix $\eta \in [0,1]$ and choose $\alpha \leq \xi_1,\xi_2 \leq 1.$ Our aim is to show that $\cV(t,l,m)$ is concave in its second argument, i.e.\
\eqlnostar{eq:claim1}{
\eta \cV(t,l_1,m) + (1-\eta) \cV(t,l_2,m)  \leq \cV(t, \eta l_1 + (1-\eta) l_2,m),
}
where for a fixed value of $m,$ we set $l_1 = m \xi_1$ and $l_2 = m \xi_2.$ Now, suppose \eqref{eq:claim1} is true. Then by reversing the change of variables, we get in \eqref{eq:claim1}
\eqstar{
\eta Q^{(0)}(t,\xi_1) + (1-\eta) Q^{(0)}(t,\xi_2) \leq Q^{(0)}(t,\eta \xi_1 + (1-\eta) \xi_2)
}
which gives us concavity of $Q^{(0)}(t,\cdot).$ It remains to show that \eqref{eq:claim1} is indeed true. 
 
We define process $L^{(1)}$ as the wealth process with starting wealth $l_1$ and portfolio strategy $\pi_1 \in \Pi_{\alpha,t,l_1,m}.$ Similarly, we define the process $L^{(2)}$ with starting wealth $l_2$ and portfolio strategy $\pi_2 \in \Pi_{\alpha,t,l_2,m}.$  Then, we have by definition
\eqstar{
\eta L^{(1)} + (1-\eta)L^{(2)} &\geq \eta \alpha (m \vee (L^{(1)})^*) + (1-\eta) \alpha (m \vee (L^{(2)})^*)\\
& \geq \alpha\Bigl(m \vee \bigl(\eta L^{(1)} + (1-\beta) L^{(2)}\bigr)^*\Bigr).
}
This gives us that $\eta \pi_1 + (1-\eta) \pi_2 \in \Pi_{\alpha,t,\eta l_1 + (1-\eta)l_2,m}.$ From concavity property of utility function $U$, it follows
\eqstar{
&\eta \E_t\left[U\Biggl(\frac{L^{(1)}_T}{(m \vee (L^{(1)})^*_T)}\Biggr) \right]+ (1-\eta) \E_t\left[U\Biggl(\frac{L^{(1)}_T}{(m \vee (L^{(1)})^*_T)}\Biggr) \right] \\
&\leq \E_t\left[U\Biggl(\frac{\eta L^{(1)}_T}{(m \vee (L^{(1)})^*_T)} + \frac{(1-\eta) L^{(2)}_T}{(m \vee (L^{(2)})^*_T)}\Biggr) \right].
}
Next, we intend to show that
\eqlnostar{eq:final step concavity}{
\frac{\eta L^{(1)}_T}{(m \vee (L^{(1)})^*_T)} + \frac{(1-\eta) L^{(2)}_T}{(m \vee (L^{(2)})^*_T)} \leq \frac{\eta L^{(1)}_T + (1-\eta) L^{(2)}_T}{(m \vee (\eta L^{(1)} + (1-\eta) L^{(2)})^*_T)}.
}
Consider the following possible scenarios where we compare the respective terms with $m$ and find the maximum
\begin{table}[H]
\centering
\small
\begin{tabular}{c c c c}
\toprule
       & $(L^{(1)})^*_T$ & $(L^{(2)})^*_T$ & $(\eta L^{(1)} + (1-\eta) L^{(2)})^*_T$\\
\hline
Case 1 & $m$ & $m$ & $m$\\
Case 2 & $m$ & $(L^{(2)})^*_T$ & $m$\\
Case 3 & $(L^{(1)})^*_T$ & $m$ & $m$\\
Case 4 & $(L^{(1)})^*_T$ & $(L^{(2)})^*_T$ & --\\
\hline
\end{tabular}
\end{table}
It is clear that the inequality in \eqref{eq:final step concavity} holds for Case 1--3 and we only need to consider Case 4. We know from the optimality condition that for strategies $\pi_1$ and $\pi_2$ which attain the maximum, the position in the risky asset becomes zero thereafter as the maximum possible utility is achieved. It follows that for such strategies, we have
\eqstar{
L^{(1)}_T = (L^{(1)})^*_T, \qquad L^{(2)}_T = (L^{(2)})^*_T.
}
Then, we get 
\eqstar{
\frac{\eta L^{(1)}_T + (1-\eta) L^{(2)}_T}{(m \vee (\eta L^{(1)} + (1-\eta) L^{(2)})^*_T)} = \frac{\eta (L^{(1)})^*_T + (1-\eta) (L^{(2)})^*_T}{(m \vee (\eta L^{(1)} + (1-\eta) L^{(2)})^*_T)} \geq 1,
}
due to 
\eqstar{
\eta (L^{(1)})^*_T + (1-\eta) (L^{(2)})^*_T \geq m, \qquad \eta (L^{(1)})^*_T + (1-\eta) (L^{(2)})^*_T \geq (\eta L^{(1)} + (1-\eta) L^{(2)})^*_T).
}
Thus, we have shown that \eqref{eq:final step concavity} is indeed true. This gives us
\eqstar{
&\eta \E_t\left[U\Biggl(\frac{L^{(1)}_T}{(m \vee (L^{(1)})^*_T)}\Biggr) \right]+ (1-\eta) \E_t\left[U\Biggl(\frac{L^{(1)}_T}{(m \vee (L^{(1)})^*_T)}\Biggr) \right] \\
&\leq \E_t\left[U\Biggl(\frac{\eta L^{(1)}_T + (1-\eta) L^{(2)}_T}{(m \vee (\eta L^{(1)} + (1-\eta) L^{(2)})^*_T)}\Biggr) \right]\\
&\leq \cV(t, \eta l_1 + (1-\eta)l_2,m).
}
As, $\pi_1, \pi_2$ are arbitrary, we have have shown \eqref{eq:claim1}. This concludes the proof for concavity of $Q^{(0)}(t,\cdot).$
\end{proof}

\subsection{Proof of Proposition \ref{thm:risk tolerance}}\label{Prop2proof}
\begin{proof}
From the calculations performed in the proof of Proposition \ref{thm:zeroth const pde}, we know that 
\eqlnostar{eq:prop risk tol1}{
\del_{t \xi } Q^{(0)}  =  -\frac{\lambda_0^2}{2} Q^{(0)}_\xi\Bigl(-1+ \cR_\xi \Bigr).
}
Differentiating \eqref{eq:define risk tolerance} w.r.t.\ $t$ gives 
\eqlnostar{eq:prop risk tol2}{
\cR_t = -\frac{Q^{(0)}_{t \xi}}{Q^{(0)}_{\xi \xi}} + \frac{Q^{(0)}_{\xi}}{\bigl(Q^{(0)}_{\xi \xi}\bigr)^2} Q^{(0)}_{t \xi \xi}.
}
Differentiating \eqref{eq:prop risk tol1} w.r.t.\ $\xi,$ we get
\eqstar{
Q^{(0)}_{t \xi \xi }   =  -\frac{\lambda_0^2}{2} Q^{(0)}_{\xi\xi}\Bigl(-1+ \cR_\xi \Bigr) -\frac{\lambda_0^2}{2}Q^{(0)}_{\xi} \cR_{\xi \xi}.
}
Plugging back the above result and \eqref{eq:prop risk tol1} into \eqref{eq:prop risk tol2} gives the PDE for $\cR.$ The terminal condition at $t=T$ is straightforward from the terminal condition for $Q^{(0)}.$ At the boundary, $\xi = \alpha, Q^{(0)} = U(\alpha)$ which due to the continuity of $Q^{(0)}$ across the boundary gives that $Q_t^{(0)} = 0.$ Then, due to the continuity of derivatives w.r.t.\ space variables across the boundary, from \eqref{eq:hjbzeroth1} we get at $\xi = \alpha,$
\eqstar{
\frac{(Q_\xi^{(0)})^2}{Q_{\xi \xi}^{(0)}} = \cR Q_\xi^{(0)} = 0.
}
As $Q_\xi^{(0)} \big \vert_{\xi = \alpha} \neq 0,$ it gives that $\cR \big \vert_{\xi = \alpha} = 0.$

We know from our calculations in the proof of Lemma \ref{lem:zeroth order properties} and Section \ref{sec:optimal strategy} that the optimal strategy corresponding to the value function $Q^{(0)}$ is given as $\pi_0 := \text{constant} \times \cR.$ It is clear that as the portfolio wealth approaches to its maximum value, i.e.\ at $\xi = 1,$ the optimal strategy suggests to unwind the risky position, $\pi_0 \vert_{\xi = 1} = 0.$ This give us the right boundary condition for $\cR$ as $\cR\big \vert_{\xi = 1} = 0.$
\end{proof}

\subsection{Proof of Proposition \ref{thm:zeroth const pde}}\label{Prop1proof}
\begin{proof}
In the definition $Q^{(0)}(t,\xi) = q^{(0)}(t,z(t,\xi)),$ we differentiate w.r.t.\ $t$ on both sides to write 
\eqstar{
\del_t Q^{(0)}  &= \del_t q^{(0)} + q^{(0)}_z \frac{\del z}{\del t}\\
&= \del_t q^{(0)}  -  \Bigl( \frac{Q_{t\xi}^{(0)}}{Q_\xi^{(0)}} + \frac{\lambda^2_0}{2}\Bigr) q^{(0)}_z.
}
It is also straightforward to check from definition \eqref{eq:def diff operator} of differential operators $\bigl(\cD_k\bigr)_{k=1,2,\ldots}$ that 
\begin{equation}
\cD_1 Q^{(0)} = q^{(0)}_z, \quad \cD_2 Q^{(0)} =  q^{(0)}_{zz} - \cR_\xi  q^{(0)}_z.\label{derivs}
\end{equation}
Next, we observe that PDE \eqref{eq:hjbzeroth transform1} can also be written as $ Q^{(0)}_t = \frac{1}{2}\lambda_0^2 \cD_2 Q^{(0)}.$ Differentiating this w.r.t.\ $\xi,$ we get 
\eqstar{
\del_{t \xi } Q^{(0)} = \lambda_0^2\left( \frac{1}{2}  \cR^2 Q^{(0)}_{\xi\xi\xi} + \cR \cR_\xi Q^{(0)}_{\xi\xi}\right).
}
Further, from the definition of $\cR,$ we get $\cR Q^{(0)}_{\xi\xi} =  -Q^{(0)}_\xi$ which after differentiating w.r.t.\ $\xi$ gives
\eqstar{
\cR^2Q^{(0)}_{\xi\xi\xi} =  - \cR Q^{(0)}_{\xi\xi} \Bigl(1+ \cR_\xi\Bigr).
}
Thus, we have 
\eqstar{
\del_{t \xi } Q^{(0)}  &= \frac{\lambda_0^2}{2} \cR Q^{(0)}_{\xi\xi}\Bigl(-1+ \cR_\xi \Bigr)\\
&=  -\frac{\lambda_0^2}{2} Q^{(0)}_\xi\Bigl(-1+ \cR_\xi \Bigr).
}
Finally, we collect all the expressions for $\del_t Q^{(0)}, \cD_1 Q^{(0)}$ and $\cD_2 Q^{(0)}$ in terms of $q^{(0)}$ to write
\eqstar{
&\Bigl( \partial_t + \lambda_0^2 \cD_1 + \frac{\lambda_0^2}{2} \cD_2 \Bigr) Q^{(0)} \\
&= \del_t q^{(0)}  -  \Bigl( -\frac{\lambda_0^2}{2} \Bigl(-1+ \cR_\xi \Bigr) + \frac{\lambda^2_0}{2}\Bigr) q^{(0)}_z + \lambda_0^2 q^{(0)}_z + \frac{\lambda_0^2 }{2}\Bigl(q^{(0)}_{zz} - \cR_\xi  q^{(0)}_z\Bigr)\\
&= \Bigl(\frac{\partial}{\partial t} + \frac{1}{2} \lambda^2_0\frac{\partial}{\partial z^2}\Bigr) q^{(0)}
}
which gives us the desired PDE.

For the terminal boundary condition for $q^{(0)},$ it follows from the definition of $z(t,\xi)$ and terminal condition \eqref{eq:hjbzeroth_bound3} that 
\eqstar{
q^{(0)}(T,z) &= U\Bigl( \bigl(U^\prime\bigr)^{-1}\bigl( \ee^{-z}\bigr)\Bigr), \quad \psi(T) < z < \infty.
}
The left boundary condition in \eqref{eq:hjbzeroth_bound3} can also be easily transformed. Next, for the right boundary condition in \eqref{eq:hjbzeroth_bound3}, we first note that 
\eqstar{
q^{(0)}_z \times \partial_\xi z &= Q^{(0)}_\xi.
}
Now, as $Q^{(0)}_\xi = 0, \text{for } \xi = 1$, it holds only if in the above relation we have 
$$ \lim_{ z \to \infty} q^{(0)}_z(t,z)  = 0.$$
This completes the proof. 
\end{proof}

\subsection{Proof of Proposition \ref{prop:solution_firstorder}}\label{Prop4proof}
We first consider the PDE problem with a terminal condition
\eqlnostar{eq:source_equationfirstorder}{
\cH q + \cS = 0, \quad q(T,z,x,y) = 0,
}
where $\cH$ is a constant coefficient linear operator
\eqstar{
\cH := \frac{\del}{\del t} + \cA_0 + \cC_0.
}
We suppose that the source term $\cS$ is of the following special form
\eqlnostar{eq:sourceterm_firstorder}{
\cS(t,z,x,y) = \sum_{k,l,n} (T-t)^n (x-\bar{x})^k(y-\bar{y})^l v(t,z,x,y)
}
where the sum has a finite number of terms, and $v$ is a solution of the homogeneous equation $\cH v = 0$.

Further, define the commutator of operators $\cH$ and $(x-\bar{x})I$ ($I$ is the identity operator), $\cL_X = [\cH, (x-\bar{x}) I]$ as
\eqstar{
\cL_X v := \cH((x-\bar{x})v) - (x-\bar{x}) \cH v,
}
which from the definition of $\cA_0$ \eqref{eq:defA0} and $\cC_0$ \eqref{eq:defC0} gives
\eqlnostar{eq:operator X}{
\cL_X = b_0 I + \sigma^2_0 \frac{\del}{\del x} + \rho \sigma_0 \beta_0 \frac{\del}{\del y} + \sigma_0 \lambda_0 \frac{\del}{\del z}.
}
Similarly, define $\cL_Y = [\cH, (y - \bar{y}) I]$, which gives
\eqlnostar{eq:operator Y}{
\cL_Y = c_0 I + \beta^2_0 \frac{\del}{\del y} + \rho \sigma_0 \beta_0 \frac{\del}{\del x} + \rho \beta_0 \lambda_0 \frac{\del}{\del z}. 
}
Using $\cL_X$ and $\cL_Y,$ we also define
\eqstar{
\cM_X(s) := (x-\bar{x})I  + (s-t)\cL_X, \quad \cM_Y(s) := (y-\bar{y})I  + (s-t)\cL_Y.
}
Using these definitions, we first give the following result related to the homogeneous solution $v$, from \cite[Lemma 3.4]{lorig2015portfolio}. Here, we provide the proof for the sake of completeness. 
\begin{lemma}
\label{lemma:operator1}
For integers $k,l,$ we have,
\eqstar{
\cH \cM^k_X(s) \cM^l_Y(s) v = 0.
}
\end{lemma}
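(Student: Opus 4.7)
The plan is to exploit the commutator identity that is built into the very definition of $\cM_X(s)$ and $\cM_Y(s)$, together with the fact that $\cH = \partial_t + \cA_0 + \cC_0$ has coefficients that are independent of $(x,y,z)$ and $t$. Because $\cL_X$ and $\cL_Y$ as given in \eqref{eq:operator X} and \eqref{eq:operator Y} are constant-coefficient first-order differential operators with no $t$-dependence, they commute with $\cH$ termwise: $[\cH,\cL_X] = 0 = [\cH,\cL_Y]$. This will be the technical cornerstone.

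The main step is then to show that $\cH$ commutes with $\cM_X(s)$ (and symmetrically with $\cM_Y(s)$). Using bilinearity of the commutator,
\begin{align*}
[\cH,\cM_X(s)] \;=\; [\cH,(x-\bar x)I] \;+\; [\cH,(s-t)\cL_X].
\end{align*}
The first bracket equals $\cL_X$ by the very definition $\cL_X = [\cH,(x-\bar x)I]$. For the second bracket, since $[\cH,\cL_X]=0$, one has $[\cH,(s-t)\cL_X] = [\cH,(s-t)]\cL_X$, and the only piece of $\cH$ that fails to commute with the scalar $(s-t)$ is $\partial_t$, giving $[\partial_t,(s-t)]=-1$. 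Hence $[\cH,(s-t)\cL_X] = -\cL_X$, and the two contributions cancel to yield $[\cH,\cM_X(s)] = 0$. The same argument produces $[\cH,\cM_Y(s)]=0$.

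From $[\cH,\cM_X(s)]=0$ a trivial induction on $k$ yields $[\cH,\cM_X^k(s)]=0$, and similarly $[\cH,\cM_Y^l(s)]=0$. Applying these in turn,
\begin{align*}
\cH\,\cM_X^k(s)\,\cM_Y^l(s)\,v \;=\; \cM_X^k(s)\,\cH\,\cM_Y^l(s)\,v \;=\; \cM_X^k(s)\,\cM_Y^l(s)\,\cH v \;=\; 0,
\end{align*}
where the last equality uses the hypothesis $\cH v = 0$.

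The only real obstacle is the careful verification that $\cL_X$ (and $\cL_Y$) genuinely commutes with $\cH$. Since $\cL_X$ is built entirely from the identity and from constant multiples of $\partial_x,\partial_y,\partial_z$, this amounts to checking that none of the coefficients in $\cA_0$ or $\cC_0$ depend on $(x,y,z)$ and that $\cL_X$ carries no $t$-dependence — both of which are immediate from \eqref{eq:defA0} and \eqref{eq:defC0}. Everything else is routine algebra with commutators.
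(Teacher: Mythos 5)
Your proof is correct and follows essentially the same route as the paper's: both hinge on the cancellation between the $+\cL_X$ produced by the commutator $[\cH,(x-\bar x)I]$ and the $-\cL_X$ produced by $\partial_t$ acting on the factor $(s-t)$, combined with the fact that the constant-coefficient operators $\cL_X,\cL_Y$ commute with $\cH$, and then induction over $k$ and $l$. The only (cosmetic) difference is that you package the computation as the operator identity $[\cH,\cM_X(s)]=0$, whereas the paper carries out the same cancellation applied directly to a function $v$ in the kernel of $\cH$.
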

\begin{proof}
We proceed by induction. We first calculate 
\eqstar{
\cH \cM_X(s) v &= \cH (x- \bar{x})v + \cH (s-t) \cL_X v\\
&= \cL_X v + (x-\bar{x})\cH v - \cL_X v + (s-t) \cH \cL_X v\\
&= \cL_X \cH v = 0,
}
where we have used the definition of the commutator $\cL_X,$ the fact that $\cL_X$ and $\cH$ commute as they are constant coefficient operators and that $\cH v = 0.$ Thus, we can then iterate over integer $k$ to show $\cH \cM_X (\cM_X^{(k-1)}v) = 0 \, (\text{as } \cH \cM_X(s) v =0).$ Similarly, we can show that $\cH \cM_Y^l v = 0$ for integer $l.$ Finally, we have $\cH (\cM^k_X(s)$ $\cM^l_Y(s) v) = 0.$ 
\end{proof}

\begin{lemma}\label{sol_lemma}
The solution $q$ of equation \eqref{eq:source_equationfirstorder} with zero terminal condition is
\eqlnostar{eq:ansatz_correction1}{
q(t,z,x,y) = \sum_{k,l,n} \int ^T_t (T-s)^n \cM^k_X(s) \cM^l_Y(s)v(t,z,x,y)\, \dd s.
}
\end{lemma}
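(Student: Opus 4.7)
The plan is to verify the two defining properties of $q$ directly. The terminal condition $q(T,z,x,y)=0$ is immediate because at $t=T$ every integral in \eqref{eq:ansatz_correction1} is over the empty interval $[T,T]$. The substantive task is to show $\cH q + \cS = 0$, and the natural tool is differentiation under the integral sign combined with Lemma \ref{lemma:operator1}, which was set up for exactly this purpose.

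I would split $\cH = \partial_t + \cA_0 + \cC_0$ into its time-derivative part and its spatial part. Since $\cA_0$ and $\cC_0$ only contain derivatives in $(x,y,z)$ (with constant coefficients), they commute with the $s$-integration and can be pushed inside. The $\partial_t$, applied via Leibniz's rule, produces a boundary contribution at $s=t$ plus an interior contribution. Using $\cM_X(t)=(x-\bar x)I$ and $\cM_Y(t)=(y-\bar y)I$, the boundary term is
\begin{align*}
-\sum_{k,l,n}(T-t)^n \cM_X^k(t)\cM_Y^l(t)v(t,z,x,y) = -\sum_{k,l,n}(T-t)^n(x-\bar x)^k(y-\bar y)^l v(t,z,x,y) = -\cS(t,z,x,y).
\end{align*}
Putting everything together yields
\begin{align*}
\cH q(t,z,x,y) = -\cS(t,z,x,y) + \sum_{k,l,n}\int_t^T (T-s)^n\,\cH\bigl[\cM_X^k(s)\cM_Y^l(s)v\bigr]\,ds,
\end{align*}
and each integrand vanishes by Lemma \ref{lemma:operator1} (noting that the $s$-dependent scalar $(T-s)^n$ is untouched by $\cH$ since $\cH$ differentiates only in $(t,z,x,y)$). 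Hence $\cH q = -\cS$, as required.

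The one delicate point to watch, rather than any genuine obstacle, is the bookkeeping of the $s$-dependence: because $\cM_X(s)$ and $\cM_Y(s)$ contain the factor $(s-t)$, naive $\partial_t$ differentiation inside the integrand generates cross-terms involving $\cL_X$ and $\cL_Y$. These are precisely the terms that Lemma \ref{lemma:operator1} absorbs, since its statement treats $\cH$ as the full operator in $(t,z,x,y)$ with $s$ as an external parameter. So provided one is consistent about what is held fixed when applying $\cH$ under the integral, the verification reduces to a single appeal to Lemma \ref{lemma:operator1} and a clean evaluation of the boundary term. Standard regularity of $v$ and the polynomial factors justifies interchanging $\cH$ with $\int_t^T$, so no additional analytic hypotheses are needed beyond what the structural form \eqref{eq:sourceterm_firstorder} of $\cS$ already supplies.
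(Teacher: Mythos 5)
Your proposal is correct and follows essentially the same route as the paper: verify the zero terminal condition from the empty integration interval, then apply $\cH$ under the integral via Leibniz's rule, identify the boundary term at $s=t$ as $-\cS$ using $\cM_X(t)=(x-\bar x)I$ and $\cM_Y(t)=(y-\bar y)I$, and kill the interior term with Lemma \ref{lemma:operator1}. The only cosmetic difference is that the paper verifies a single monomial source and invokes linearity, whereas you carry the sum throughout; your remark on the $(s-t)$-dependence being absorbed by Lemma \ref{lemma:operator1} is accurate and matches how that lemma is used in the paper.
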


\begin{proof}
This can be shown by using the form of source term \eqref{eq:sourceterm_firstorder} and Lemma \ref{lemma:operator1}. Let us suppose that the source term consists of a monomial and is given as $\cS(t,z,x,y) = (T-t)^n (x-\bar{x})^k(y-\bar{y})^l v(t,z,x,y).$ In this case, from our claim, the solution should be given as 
$$q(t,z,x,y) = \int ^T_t (T-s)^n \cM^k_X(s) \cM^l_Y(s)v(t,z,x,y) \dd s.$$
We verify by computing 
\eqstar{
\cH q &= -(T-t)^n \cM^k_X(t) \cM^l_Y(t) v(t,z,x,y) + \int ^T_t (T-s)^n \cH \Bigl(\cM^k_X(s) \cM^l_Y(s)v(t,z,x,y)\Bigr) \dd s\\
&= -(T-t)^n (x-\bar{x})^k (y-\bar{y})^l v(t,z,x,y) \\
&= - \cS.
}
It is also easy to see that for the form of solution proposed in \eqref{eq:ansatz_correction1}, the terminal condition at $T$ is satisfied. The result follows from linearity of the PDE problem.
\end{proof}

Finally, we give the proof of Proposition \ref{prop:solution_firstorder}.

\begin{proof}
We first observe that, since $q^{(0)}$ solves $\cH q^{(0)} = 0$, then $q^{(0)}_z$ also solves the homogeneous equation, as the operator $\cH$ has constant coefficients. 
We set $v = q_z^{(0)}$.
From 
\eqref{eq:main source term}, the source term is 
\eqstar{
\cS(t,z,x,y) &=\Bigl( \bigl(\frac{1}{2} \lambda^2\bigr)_{1,0}(x-\bar{x}) + \bigl(\frac{1}{2} \lambda^2\bigr)_{0,1}(y-\bar{y}) \Bigr)v, 
}
and so from Lemma \ref{sol_lemma}, we obtain the solution
\begin{align}
q^{(1)}(t,z,x,y) =& \Bigl[\bigl(\frac{1}{2} \lambda^2\bigr)_{1,0}\bigl((T-t)(x-\bar{x}) + \frac{1}{2}(T-t)^2 \cL_X\bigr) \\
&+ \bigl(\frac{1}{2} \lambda^2\bigr)_{0,1}\bigl((T-t)(y-\bar{y}) + \frac{1}{2}(T-t)^2 \cL_Y \bigr) \Bigr] q^{(0)}_z(t,z). \label{q1exp}
\end{align}
From the expansion for $\lambda(y),$ we get 
\eqstar{
\bigl(\frac{1}{2}\lambda^2\bigr)_{1,0} =  \lambda_0  \lambda_{1,0}, \quad \bigl(\frac{1}{2}\lambda^2\bigr)_{0,1} =  \lambda_0  \lambda_{0,1}.
}
Putting back the expression of $\cL_X$ and $\cL_Y$ from \eqref{eq:operator X} and \eqref{eq:operator Y} 
into \eqref{q1exp}, we get the expression in \eqref{eq:first correction term}. The terminal condition at $t=T$ is clearly satisfied. 

It remains to check the boundary conditions for $q^{(1)}$. We show that the boundary conditions for $Q^{(1)}$, corresponding to the original variables $(t,\xi)$, are satisfied. Using \eqref{q1def} and \eqref{derivs}, we obtain \eqref{eq:first order final}. Now, due to the zero boundary condition 
at $\xi = \alpha$ for the risk-tolerance function $\cR,$ we get from \eqref{eq:first order final} that 
$Q^{(1)}(t,\alpha,x,y) =  0$,
which means that the left boundary condition in \eqref{eq:hjbbound11} is satisfied. Consequently, the left boundary condition in \eqref{eq:transformed_boundary1} is satisfied for $q^{(1)}.$ 

Next, we calculate
\eqlnostar{eq:intercorrection3}{
Q^{(1)}_{\xi}(t,\xi,x,y) &= (T-t)\lambda_0 A(t,x,y)\Bigl(\cR_{\xi} Q^{(0)}_{\xi} + \cR Q^{(0)}_{\xi\xi} \Bigr) + \frac{1}{2}(T-t)^2 \lambda_0 B \Bigl( -2  \bigl[\cR_{\xi} Q^{(0)}_{\xi} + \cR Q^{(0)}_{\xi \xi} \bigr] \nonumber\\
&+ 3\cR^2 \partial^3_\xi Q^{(0)} +\cR^3 \partial^4_\xi Q^{(0)} \Bigr).
}
From our Assumption \ref{reg-assump} on the boundedness of $\del^{k}_\xi Q^{(0)}(t,1)$ for $k\leq5$, 
we have 
 \[ \lim_{\xi \to 1}\cR^{k} \del^{(k+1)}_\xi Q^{(0)} = 0, \qquad k = 1,2,3. \]
Then, we can use the boundary condition of $Q^{(0)}_{\xi}$ and $\cR$ at $\xi = 1$ 
to conclude from \eqref{eq:intercorrection3} that 
$$Q^{(1)}_{\xi}(t,\xi,x,y) \Big \vert_{\xi = 1} = 0,$$
which means that the right boundary condition in \eqref{eq:hjbbound11} is satisfied. This implies that the right boundary condition in \eqref{eq:transformed_boundary1} is satisfied for $q^{(1)}$.
\end{proof}

\bibliographystyle{abbrvnat}
\bibliography{utilitymaxasymptotics}

\end{document}